\begin{document}

\title{Optimal Beamforming for Two-Way Multi-Antenna Relay Channel with Analogue Network Coding}

\author{Rui Zhang,~\IEEEmembership{Member,~IEEE,}
        Ying-Chang Liang,~\IEEEmembership{Senior Member,~IEEE,}
        Chin Choy Chai,~\IEEEmembership{Member,~IEEE}, and Shuguang Cui, ~\IEEEmembership{Member,~IEEE} 
\thanks{Manuscript received August 1, 2008, revised February 16, 2009.}
\thanks{R. Zhang, Y.-C. Liang, and C. C. Chai are with the Institute for Infocomm Research, A*STAR, Singapore. (e-mails:
\{rzhang, ycliang, chaicc\}@i2r.a-star.edu.sg)}
\thanks{S. Cui is with the Department of Electrical and Computer
Engineering, Texas A\&M University, Texas, USA. (e-mail:
cui@ece.tamu.edu)}}

\maketitle

\begin{abstract}
This paper studies the wireless \emph{two-way relay channel} (TWRC),
where two source nodes, S1 and S2, exchange information through an
assisting relay node, R. It is assumed that R receives the sum
signal from S1 and S2 in one time-slot, and then amplifies and
forwards the received signal to both S1 and S2 in the next
time-slot. By applying the principle of \emph{ analogue network
coding} (ANC), each of S1 and S2 cancels the so-called
``self-interference'' in the received signal from R and then decodes
the desired message. Assuming that S1 and S2 are each equipped with
a single antenna and R with multi-antennas, this paper analyzes the
\emph{capacity region} of the ANC-based TWRC with linear processing
(beamforming) at R. The capacity region contains all the achievable
bidirectional rate-pairs of S1 and S2 under the given transmit power
constraints at S1, S2, and R. We present the optimal relay
beamforming structure as well as an efficient algorithm to compute
the optimal beamforming matrix based on convex optimization
techniques. Low-complexity suboptimal relay beamforming schemes are
also presented, and their achievable rates are compared against the
capacity with the optimal scheme.
\end{abstract}

\begin{keywords}
Analogue network coding, beamforming, convex optimization, two-way
relay channel.
\end{keywords}

\IEEEpeerreviewmaketitle

\setlength{\baselineskip}{1.0\baselineskip}
\newtheorem{theorem}{\underline{Theorem}}[section]
\newtheorem{lemma}{\underline{Lemma}}[section]
\newtheorem{remark}{\underline{Remark}}[section]
\newtheorem{corollary}{\underline{Corollary}}[section]
\newtheorem{problem}{\underline{Problem}}[section]
\newtheorem{algorithm}{\underline{Algorithm}}[section]
\newcommand{\mv}[1]{\mbox{\boldmath{$ #1 $}}}

\section{Introduction}

\PARstart{N}etwork coding \cite{Network_coding} is a new and
promising design paradigm for modern communication networks: By
allowing intermediate network nodes to mix the data or signals
received from multiple links, as opposed to separating them by
traditional approaches, network coding reduces the amount of
transmissions in the network and thus improves the overall network
throughput. Recently, there has been increasing attention from the
research community to apply the principle of network coding in
wireless communication networks. In fact, wireless network is the
most natural setting to apply network coding due to the broadcast
property of radio transmissions, i.e., a single transmission of one
wireless terminal may successfully reach multiple neighboring
terminals, without the need of dedicated links to these terminals as
required in wireline networks. Furthermore, network coding can
potentially be a very effective solution to the classical
``interference problem'' in wireless networks, since it transforms
the traditional approach of avoiding or mitigating the interference
among wireless terminals into a new methodology of {\it interference
exploitation}.

The {\it two-way relay channel} (TWRC) is one of the basic elements
in decentralized/centralized wireless networks. The simplest TWRC
consists of two source nodes, S1 and S2, which exchange information
via a helping relay node, R. Traditionally, in order to avoid the
interference at R, simultaneous transmission of S1 and S2 is
unadvisable at the same frequency. Thus, in total four time-slots
are usually required to accomplish one round of information exchange
between S1 and S2 via R. However, by applying the idea of network
coding, the authors in \cite{Wu05} proposed a method to reduce the
number of required time-slots from four to three. In this method, S1
first sends to R during time-slot 1 the message $s_1$ consisting of
bits $b_1(1),\ldots,b_1(N)$ with $N$ denoting the message length in
bits, and then R decodes $s_1$. During time-slot 2, S2 sends to R
the message $s_2$ consisting of bits $b_{2}(1),\ldots, b_{2}(N)$,
and R decodes $s_{2}$. In time-slot 3, R broadcasts to S1 and S2 a
new message $s_3$ consisting of bits $b_{3}(1),\ldots, b_{3}(N)$
obtained by bit-wise exclusive-or (XOR) operations over $b_1(n)$'s
and $b_2(n)$'s, i.e., $b_3(n)=b_1(n) \oplus b_2(n), \forall n$.
Since S1 knows $b_1(n)$'s, S1 can recover its desired message $s_2$
by first decoding $s_3$ and then obtaining $b_2(n)$'s as $b_{1}(n)
\oplus b_{3}(n), \forall n$. Similarly, S2 can recover $s_1$.

The principle of network coding has been further investigated for
TWRC by exploiting various physical-layer relay operations
\cite{ANC}, \cite{PHY_network_coding}. The scheme proposed in
\cite{ANC} is named as {\it analogue network coding} (ANC), while
the one in \cite{PHY_network_coding} named as {\it physical-layer
network coding} (PNC). For both ANC and PNC, the number of
time-slots required for S1 and S2 to exchange one round of
information is reduced from three \cite{Wu05} to two, by allowing S1
and S2 to transmit simultaneously to R during one time-slot and
thereby combining the first two time-slots in \cite{Wu05} into one
time-slot. ANC and PNC differ in their corresponding relay
operations, which are amplify-and-forward (AF) and
estimate-and-forward (EF), respectively. In ANC, R linearly
amplifies the sum signal received from S1 and S2, and then
broadcasts the resulting signal to S1 and S2. ANC is based upon an
interesting observation that the signal collision at R during the
first time-slot is in fact harmless, since such a collision can be
resolved at S1 (S2) during the second time-slot by subtracting from
its received signal the so-called {\it self-interference}, which is
related to the previously transmitted message from S1 (S2) itself.
In contrast to ANC, more sophisticated (nonlinear) operations than
AF are required at R for PNC \cite{PHY_network_coding}-\cite{Zhang}.
Instead of decoding messages $s_1$ from S1 and $s_2$ from S2
separately in two different time-slots like in \cite{Wu05}, the EF
method proposed in \cite{PHY_network_coding} estimates at R the
bitwise XORs between $b_{1}(n)$'s and $b_{2}(n)$'s from the mixed
signal of S1 and S2, and re-encodes the decoded bits into a new
broadcasting message $s_3$; each one of S1 and S2 then recovers the
other's message by the same decoding method as that in \cite{Wu05}.
Alternatively, it is possible to first deploy multiuser decoding at
R to decode $s_1$ and $s_2$ separately, and then jointly encode
$s_1$ and $s_2$ into a new broadcasting message $s_3$; given the
side information on $s_1$ ($s_2$) at S1 (S2), S1 (S2) decodes $s_2$
($s_1$). The above decode-and-forward (DF) relay operation for TWRC
has been studied in \cite{Boche08a}, \cite{Xie07}. On the other
hand, TWRC has also been studied in \cite{Rankov05}-\cite{Tarokh07}
from cooperative communication perspectives, with a major objective
to compensate for the loss of spectral efficiency in the
conventional {\it one-way relay channel} (OWRC) owing to the
half-duplex constraint. Non-surprisingly, the solutions proposed
therein are similar to those inspired by the principle of network
coding.

Furthermore, TWRC has been studied jointly with other physical-layer
transmission techniques based on, e.g.,
orthogonal-frequency-division-multiplexing (OFDM) \cite{Ho98},
\cite{two_way_OFDMA}, and multiple transmit and/or multiple receive
antennas \cite{Ham07}-\cite{Cuitao}, to further improve the
bidirectional relay throughput. For the multi-antenna TWRC, the DF
relay strategy was studied in \cite{Ham07}, \cite{Boche08b}, the AF
relay strategy or ANC was studied in \cite{Unger2007}, and the
distributed space-time coding strategy for the relay was studied in
\cite{Cuitao}. In this paper, we focus on the AF-/ANC-based
multi-antenna TWRC. Assuming that S1 and S2 each has a single
antenna and R has $M$ antennas, $M\geq 2$, we study the optimal
design of linear processing (beamforming) at the relay to achieve
the capacity region of AF-/ANC-based TWRC, which consists of all the
achievable rate-pairs of S1 and S2 under the given transmit power
constraints at S1, S2, and R. Our main goal is to provide insightful
guidelines on the design of AF-based multi-antenna TWRC, which
differs from the results for the conventional AF-based multi-antenna
OWRC given in, e.g., \cite{Hua07}-\cite{Zhang08}. The main results
of this paper are summarized as follows:\footnote{Preliminary
versions of this paper have been presented in \cite{Liang},
\cite{Rui}.}
\begin{itemize}

\item We derive the optimal beamforming structure at R, which
achieves the capacity region of an ANC-based TWRC. The optimal
structure reduces the number of complex-valued design variables in
the relay beamforming matrix from $M^2$ to 4 when $M>2$.
Furthermore, by transforming the capacity region characterization
problem into an equivalent relay power minimization problem under
certain signal-to-noise-ratio (SNR) constraints at S1 and S2, we
derive an efficient algorithm to compute the globally optimal
beamforming matrix based on convex optimization techniques.

\item Inspired by the optimal relay beamforming
structure, we propose two low-complexity suboptimal beamforming
schemes, based on the principle of ``matched-filter (MF)'' and
``zero-forcing (ZF)'', respectively. We analyze their performances
in terms of the achievable sum-rate in TWRC against the maximum
sum-rate, or the sum-capacity, achieved by the optimal scheme. It is
shown that the ZF-based relay beamforming with the objective of
suppressing the uplink (from S1 and S2 to R) and downlink (from R to
S1 and S2) interferences at R may not be a good solution for the
ANC-based TWRC, since these interferences are indeed
self-interferences and thus can be later removed at S1 and S2. On
the other hand, it is shown that the MF-based relay beamforming,
which maximizes the signal power forwarded to S1 and S2, achieves
the sum-rate close to the sum-capacity under various SNR and channel
conditions.
\end{itemize}

The rest of this paper is organized as follows. Section
\ref{sec:system model} describes the TWRC model with ANC. Section
\ref{sec:capacity region} studies the capacity region of the
ANC-based TWRC, derives the optimal structure for relay beamforming,
and proposes an algorithm to compute the optimal beamforming matrix.
Section \ref{sec:suboptimal schemes} presents the low-complexity
suboptimal relay beamforming schemes. Section \ref{sec:performances}
analyzes the performances of both the optimal and suboptimal relay
beamforming schemes in terms of the achievable sum-rate in TWRC.
Section \ref{sec:numerical results} shows numerical results on the
performances of the proposed schemes, in comparison with other
existing schemes in the literature. Finally, Section
\ref{sec:conclusions} concludes the paper.

{\it Notation}: Scalars are denoted by lower-case letters, e.g.,
$x$, and bold-face lower-case letters are used for vectors, e.g.,
$\mv{x}$, and bold-face upper-case letters for matrices, e.g.,
$\mv{X}$. In addition, $\mathtt{tr}(\mv{S})$, $|\mv{S}|$,
$\mv{S}^{-1}$, and $\mv{S}^{\frac{1}{2}}$ denote the trace,
determinant, inverse, and square-root of a square matrix $\mv{S}$,
respectively, and $\mathtt{diag}(\mv{S}_1,\ldots,\mv{S}_M)$ denotes
a block-diagonal square matrix with $\mv{S}_1,\ldots,\mv{S}_M$ as
the diagonal square matrices. $\mv{S}\succeq 0$ means that $\mv{S}$
is a positive semi-definite matrix \cite{Boydbook}. For an
arbitrary-size matrix $\mv{M}$, $\mv{M}^{T}$, $\mv{M}^*$,
$\mv{M}^{H}$, and $\mv{M}^{\dag}$ denote the transpose, conjugate,
conjugate transpose, and pseudo inverse of $\mv{M}$, respectively,
$\mv{M}(i,j)$ denotes the $(i,j)$-th element of $\mv{M}$, and
$\mathtt{rank}(\mv{M})$ denotes the rank of $\mv{M}$. $\mv{I}$ and
$\mv{0}$ denote the identity matrix and the all-zero matrix,
respectively. $\|\mv{x}\|$ denotes the Euclidean norm of a complex
vector $\mv{x}$, while $|z|$ denotes the norm of a complex number
$z$. $\mathbb{C}^{x \times y}$ denotes the space of $x\times y$
matrices with complex-valued elements. The distribution of a
circular symmetric complex Gaussian (CSCG) random vector with mean
$\mv{x}$ and covariance matrix $\mv{\Sigma}$ is denoted by
$\mathcal{CN}(\mv{x},\mv{\Sigma})$, and $\sim$ stands for
``distributed as''.

\section{System Model} \label{sec:system model}

As shown in Fig. \ref{fig:system model}, we consider a TWRC
consisting of two source nodes, S1 and S2, each with a single
antenna and a relay node, R, equipped with $M$ antennas, $M \geq 2$.
All the channels involved are assumed to be flat-fading over a
common narrow-band. It is assumed that the transmission protocol of
TWRC uses two consecutive equal-duration time-slots for one round of
information exchange between S1 and S2 via R. During the first
time-slot, both S1 and S2 transmit concurrently to R, which linearly
processes the received signal and then broadcasts the resulting
signal to S1 and S2 during the second time-slot. It is also assumed
that perfect synchronization has been established among S1, S2, and
R prior to data transmission. The received baseband signal at R in
the first time-slot is expressed as
\begin{eqnarray}\label{eq:Rx signal relay}
\mv{y}_R(n)=\mv{h}_1\sqrt{p_1}s_1(n)+\mv{h}_2\sqrt{p_2}s_2(n)+\mv{z}_R(n)
\end{eqnarray}
where $\mv{y}_R(n)\in\mathbb{C}^{M\times 1}$ is the received signal
vector at symbol index $n$, $n=1,\ldots,N$, with $N$ denoting the
total number of transmitted symbols during one time-slot;
$\mv{h}_1\in\mathbb{C}^{M\times 1}$ and
$\mv{h}_2\in\mathbb{C}^{M\times 1}$ represent the channel vectors
from S1 to R and from S2 to R, respectively, which are assumed to be
constant during the two time-slots; and $s_1(n)$ and $s_2(n)$ are
the transmitted symbols from S1 and S2, respectively. Since in this
paper we are interested in the information-theoretic limits of TWRC,
it is assumed that the optimal Gaussian codebook is used at S1 and
S2, and thus $s_1(n)$ and $s_2(n)$ are independent random variables
both $\sim\mathcal{CN}(0,1)$; $p_1$ and $p_2$ denote the transmit
powers of S1 and S2, respectively; and
$\mv{z}_R(n)\in\mathbb{C}^{M\times 1}$ is the receiver noise vector,
independent over $n$, and without loss of generality (w.l.o.g.), it
is assumed that $\mv{z}_R(n)\sim\mathcal{CN}(\mv{0},\mv{I}), \forall
n$. Upon receiving the mixed signal from S1 and S2, R processes it
with AF relay operation, also known as {\it linear analogue
relaying}, and then broadcasts the processed signal to S1 and S2
during the second time-slot. Mathematically, the linear processing
(beamforming) operation at the relay can be concisely represented as
\begin{eqnarray}\label{eq:Tx signal relay}
\mv{x}_R(n)=\mv{A}\mv{y}_R(n), \ n=1,\ldots,N
\end{eqnarray}
where $\mv{x}_R(n)\in\mathbb{C}^{M\times 1}$ is the transmitted
signal at R, and $\mv{A}\in\mathbb{C}^{M\times M}$ is the relay
processing matrix.

\begin{figure}
\psfrag{a}{S1}\psfrag{b}{R}\psfrag{c}{S2}\psfrag{d}{$\mv{h}_1$}\psfrag{e}{$\mv{h}_2$}
\psfrag{f}{$\mv{h}_1^T$}\psfrag{g}{$\mv{h}_2^T$}
\begin{center}
\scalebox{0.7}{\includegraphics*[50pt,521pt][365pt,745pt]{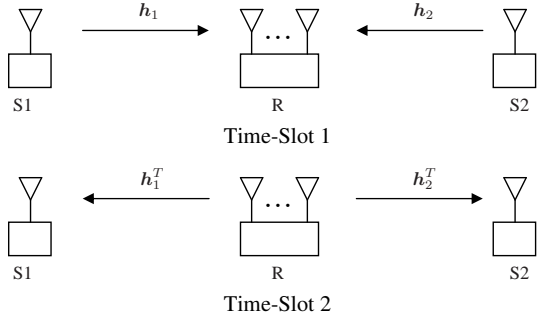}}\vspace{-0.3in}
\caption{The two-way multi-antenna relay channel.}\label{fig:system
model}\vspace{-0.2in}
\end{center}
\end{figure}

Note that the transmit power of R can be shown equal to
\begin{eqnarray}\label{eq:relay power}
p_R(\mv{A})&=&\mathtt{tr} \left(\mv{x}_R(n)\mv{x}_R^H(n)\right) \nonumber \\
&=& \| \mv{A}\mv{h}_1 \|^{2} p_{1} + \| \mv{A}\mv{h}_2 \|^{2} p_2 +
\mathtt{tr} (\mv{A}\mv{A}^{H}).
\end{eqnarray}
We can assume w.l.o.g. that channel reciprocity holds for TWRC
during uplink and downlink transmissions, i.e., the channels from R
to S1 and S2 during the second time-slot are given as $\mv{h}_1^T$
and $\mv{h}_2^T$, respectively.\footnote{This assumption is made
merely for the purpose of exposition, and the results developed in
this paper hold similarly for the more general case with independent
uplink and downlink channels.} Thus, the received signals at S1 can
be written as
\begin{eqnarray}
y_1 (n) &=& \mv{h}_1^T \mv{x}_R(n) + z_1(n) \nonumber \\
 &=& \mv{h}_1^T\mv{A}\mv{h}_1\sqrt{p_1}s_1(n)+
 \mv{h}_1^T\mv{A}\mv{h}_2\sqrt{p_2}s_2(n) \nonumber \\ && +
 \mv{h}_1^T\mv{A}\mv{z}_R(n)+z_1(n) \label{eq:Rx signal S1 2}
\end{eqnarray}
for $n=1,\ldots,N$, where $z_1(n)$'s are the independent receiver
noise samples at S1, and it is assumed that
$z_1(n)\sim\mathcal{CN}(0,1), \forall n$. Note that on the
right-hand side (RHS) of (\ref{eq:Rx signal S1 2}), the first term
is the self-interference of S1, while the second term contains the
desired message from S2. Assuming that both
$\mv{h}_1^T\mv{A}\mv{h}_1$ and $\mv{h}_1^T\mv{A}\mv{h}_2$ are
perfectly known at S1 via training-based channel estimation
\cite{Gao} prior to data transmission, S1 can first subtract its
self-interference from $y_1 (n)$ and then coherently demodulate
$s_2(n)$. The above practice is known as {\it analogue network
coding} (ANC) \cite{ANC}. From (\ref{eq:Rx signal S1 2}),
subtracting the self-interference from $y_1 (n)$ yields
\begin{eqnarray}\label{eq:Rx signal S1 3}
\tilde{y}_1(n)= \tilde{h}_{21}\sqrt{p_2}s_2(n)+ \tilde{z}_1(n), \
n=1,\ldots,N
\end{eqnarray}
where $\tilde{h}_{21}=\mv{h}_1^T\mv{A}\mv{h}_2$, and
$\tilde{z}_1(n)\sim\mathcal{CN}(0,\|\mv{A}^H\mv{h}_1^*\|^2+1)$. From
(\ref{eq:Rx signal S1 3}), for a given $\mv{A}$, the maximum
achievable rate (in bits/complex dimension) for the end-to-end link
from S2 to S1 via R, denoted by $r_{21}$, satisfies
\begin{eqnarray} \label{eq:rate 21}
r_{21} \leq \frac{1}{2}\log_{2} \left (1 +
\frac{|\mv{h}_1^T\mv{A}\mv{h}_2|^2 p_2 }{ \|\mv{A}^H\mv{h}_1^*\|^2 +
1}\right )
\end{eqnarray}
where the factor $\frac{1}{2}$ is due to the use of two orthogonal
time-slots for relaying. Similarly, it can be shown that the maximum
achievable rate $r_{12}$ for the link from S1 to S2 via R satisfies
\begin{eqnarray} \label{eq:rate 12}
r_{12} \leq \frac{1}{2}\log_{2} \left (1 +
\frac{|\mv{h}_2^T\mv{A}\mv{h}_1|^2 p_1 }{ \|\mv{A}^H\mv{h}_2^*\|^2 +
1}\right ).
\end{eqnarray}

Next, we define the capacity region of ANC-based TWRC,
$\mathcal{C}(P_1,P_2,P_R)$, subject to transmit power constraints at
S1, S2, and R, denoted by $P_1$, $P_2$, and $P_R$, respectively.
First, for a fixed pair of $p_1$ and $p_2$, $p_1\leq P_1$ and
$p_2\leq P_2$, we define the achievable rate region for S1 and S2 as
\begin{eqnarray}\label{eq:rate region}
\mathcal{R}(p_1,p_2,P_R)\triangleq\bigcup_{\mv{A}:~p_R\left(\mv{A}\right)\leq
P_R} \left\{(r_{21},r_{12}): (\ref{eq:rate 21}), (\ref{eq:rate
12})\right\}.
\end{eqnarray}
Then, $\mathcal{C}(P_1,P_2,P_R)$ is defined as
\begin{eqnarray}\label{eq:capacity region}
\mathcal{C}(P_1,P_2,P_R)\triangleq\bigcup_{(p_1,p_2):p_1\leq P_1,
p_2\leq P_2} \mathcal{R}(p_1,p_2,P_R).
\end{eqnarray}
Note that in (\ref{eq:capacity region}), $\mathcal{C}(P_1,P_2,P_R)$
can be obtained by taking the union over all the achievable rate
regions, $\mathcal{R}(p_1,p_2,P_R)$'s, corresponding to different
feasible pairs of $p_1$ and $p_2$. Thus, for the rest of this paper,
we focus our study on characterization of $\mathcal{R}(p_1,p_2,P_R)$
for some fixed $p_1$ and $p_2$. Also note from (\ref{eq:rate
region}) that the relay beamforming matrix $\mv{A}$ plays the role
of realizing different rate tradeoffs between $r_{21}$ and $r_{12}$
on the boundary of $\mathcal{R}(p_1,p_2,P_R)$.

For the convenience of later analysis in this paper, we express
(\ref{eq:Rx signal S1 2}) into an equivalent matrix form as follows,
combined with $y_2(n)$'s and $z_2(n)$'s defined for S2 similarly as
for S1.
\begin{eqnarray}\label{eq:channel matrix form}
\left[\begin{array}{l} y_2(n) \\ y_1(n) \end{array}\right]&=&
\mv{H}_{\rm DL}\mv{A}\mv{H}_{\rm UL} \left[\begin{array}{l}
\sqrt{p_1}s_1(n)
\\ \sqrt{p_2}s_2(n)\end{array}\right]+\mv{H}_{\rm
DL}\mv{A}\mv{z}_R(n) \nonumber \\
&& +\left[\begin{array}{l} z_2(n) \\
z_1(n)
\end{array}\right]
\end{eqnarray}
where $\mv{H}_{\rm UL}=[\mv{h}_1, \mv{h}_2]\in\mathbb{C}^{M\times
2}$ and $\mv{H}_{\rm DL}=[\mv{h}_2,
\mv{h}_1]^T\in\mathbb{C}^{2\times M}$ denote the uplink (UL) and
downlink (DL) channel matrices, respectively. Note that $\mv{H}_{\rm
DL} = \mv{F}\mv{H}_{\rm UL}^T$, where $\mv{F} = \left[{\footnotesize
\begin{array}{cc}
0 & 1\\
1 & 0 \end{array} }\right]$.

\section{Capacity Region Characterization} \label{sec:capacity
region}

In this section, we study the capacity region of ANC-based TWRC by
characterizing $\mathcal{R}(p_1,p_2,P_R)$ defined in (\ref{eq:rate
region}) for a given set of $p_1$, $p_2$, and $P_R$. First, we
derive the optimal relay beamforming structure for $\mv{A}$ that
attains the boundary rate-pairs of $\mathcal{R}(p_1,p_2,P_R)$. It is
shown that with the optimal beamforming structure, the number of
unknown complex-valued variables to be sought in $\mv{A}$ is reduced
from $M^2$ to 4 when $M>2$. Then, we formulate the optimization
problem and present an efficient algorithm to compute the optimal
$\mv{A}$'s to achieve different boundary rate-pairs of
$\mathcal{R}(p_1,p_2,P_R)$.

\subsection{Optimal Relay Beamforming Structure}

Let the singular-value-decomposition (SVD) of $\mv{H}_{\rm UL}$ be
expressed as
\begin{equation}
\mv{H}_{\rm UL}=\mv{U}\mv{\Sigma}\mv{V}^H
\end{equation}
where $\mv{U}\in\mathbb{C}^{M\times 2}$,
$\mv{\Sigma}=\mathtt{diag}(\sigma_1,\sigma_2)$ with $\sigma_1\geq
\sigma_2\geq0$, and $\mv{V}\in\mathbb{C}^{2\times 2}$. It thus
follows that $\mv{H}_{\rm DL}=\mv{F}\mv{V}^*\mv{\Sigma}\mv{U}^T$. We
then have the following theorem.
\begin{theorem}\label{theorem:optimal beamforming}
The optimal relay beamforming matrix, $\mv{A}$, that attains a
boundary rate-pair of $\mathcal{R}(p_1,p_2,P_R)$ defined in
(\ref{eq:rate region}) has the following structure:
\begin{eqnarray}\label{eq:optimal A}
\mv{A} = \mv{U}^* \mv{B} \mv{U}^H
\end{eqnarray}
where $\mv{B}\in\mathbb{C}^{2\times 2}$ is an unknown matrix.
\end{theorem}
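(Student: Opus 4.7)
The plan is to exploit the fact that both $\mv{h}_1$ and $\mv{h}_2$ lie in the two-dimensional subspace spanned by $\mv{U}$, so that only a $2\times 2$ block of $\mv{A}$ can possibly carry useful signal energy while every orthogonal degree of freedom either wastes relay power or amplifies noise. I would expand $\mv{A}$ in a unitary basis adapted to $\mathrm{span}(\mv{U})$ on the input side and to $\mathrm{span}(\mv{U}^*)$ on the output side---the conjugate on the output side appears because the downlink channels enter (\ref{eq:Rx signal S1 2}) as $\mv{h}_i^T=(\mv{h}_i^*)^H$---and then argue that the ``off-range'' blocks can only hurt.

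Concretely, I would first complete $\mv{U}$ to a unitary matrix $\mv{Q}_R=[\mv{U},\mv{U}_{\perp}]\in\mathbb{C}^{M\times M}$ and set $\mv{Q}_L=[\mv{U}^*,\mv{U}_{\perp}^*]$; a short check based on $\mv{U}^T\mv{U}^*=\overline{\mv{U}^H\mv{U}}=\mv{I}_2$ (and the analogous identities for $\mv{U}_{\perp}$) shows that $\mv{Q}_L$ is also unitary. Every $\mv{A}\in\mathbb{C}^{M\times M}$ then admits a unique representation $\mv{A}=\mv{Q}_L\tilde{\mv{B}}\mv{Q}_R^H$, and I would partition $\tilde{\mv{B}}$ conformally into blocks $\mv{B}_{11}\in\mathbb{C}^{2\times 2}$, $\mv{B}_{12}\in\mathbb{C}^{2\times(M-2)}$, $\mv{B}_{21}\in\mathbb{C}^{(M-2)\times 2}$, and $\mv{B}_{22}\in\mathbb{C}^{(M-2)\times(M-2)}$. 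Denoting by $\mv{c}_1,\mv{c}_2$ the columns of $\mv{\Sigma}\mv{V}^H$, so that $\mv{h}_j=\mv{U}\mv{c}_j$, a direct substitution yields
\begin{eqnarray*}
\mv{h}_i^T\mv{A}\mv{h}_j &=& \mv{c}_i^T\mv{B}_{11}\mv{c}_j,\\
\|\mv{A}\mv{h}_j\|^2 &=& \|\mv{B}_{11}\mv{c}_j\|^2+\|\mv{B}_{21}\mv{c}_j\|^2,\\
\|\mv{A}^H\mv{h}_i^*\|^2 &=& \|\mv{B}_{11}^H\mv{c}_i^*\|^2+\|\mv{B}_{12}^H\mv{c}_i^*\|^2,\\
\mathtt{tr}(\mv{A}\mv{A}^H) &=& \sum_{k,l}\|\mv{B}_{kl}\|_F^2.
\end{eqnarray*}

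Hence the SNR numerators in (\ref{eq:rate 21}) and (\ref{eq:rate 12}) depend on $\tilde{\mv{B}}$ only through $\mv{B}_{11}$, whereas $\mv{B}_{12}$ enters only additively in the noise-amplification terms of the SNR denominators, and $\mv{B}_{21}$ and $\mv{B}_{22}$ enter only through the relay power $p_R(\mv{A})$ of (\ref{eq:relay power}). It follows that, given any feasible $\mv{A}$, the replacement $\mv{A}'=\mv{U}^*\mv{B}_{11}\mv{U}^H$ obtained by zeroing $\mv{B}_{12},\mv{B}_{21},\mv{B}_{22}$ satisfies $p_R(\mv{A}')\leq p_R(\mv{A})\leq P_R$ and yields a rate pair $(r_{21},r_{12})$ componentwise no smaller than that of $\mv{A}$. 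Therefore every boundary rate-pair of $\mathcal{R}(p_1,p_2,P_R)$ is attained by some $\mv{A}$ of the stated form $\mv{U}^*\mv{B}\mv{U}^H$, which is the claim. The main obstacle is purely notational: one must track the complex conjugations carefully to verify that $\mv{Q}_L$ is unitary and that the four quadratic forms separate cleanly across the four blocks. Once that bookkeeping is done, the result reduces to a simple ``discard useless degrees of freedom'' argument rather than any real optimization.
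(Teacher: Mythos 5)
Your proof is correct and takes essentially the same route as the paper's Appendix A: the paper likewise writes $\mv{A}=[\mv{U}^*,(\mv{U}^{\bot})^*]\left[\begin{smallmatrix}\mv{B} & \mv{C}\\ \mv{D} & \mv{E}\end{smallmatrix}\right][\mv{U},\mv{U}^{\bot}]^H$, observes that the signal terms depend only on $\mv{B}$, that $\mv{C}$ only inflates the noise terms, and that $\mv{C},\mv{D},\mv{E}$ only inflate the relay power, hence all three should be zero. Your blocks $\mv{B}_{12},\mv{B}_{21},\mv{B}_{22}$ correspond exactly to the paper's $\mv{C},\mv{D},\mv{E}$, and the conjugation bookkeeping you flag is handled identically there.
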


\begin{proof}
Please refer to Appendix \ref{appendix:proof beamforming
optimality}.
\end{proof}

\begin{remark}
In the conventional AF-based multi-antenna OWRC, the optimal
beamforming structure at relay to maximize the end-to-end channel
capacity has been studied in, e.g., \cite{Hua07}, \cite{Vidal07}.
Applying the results therein to the OWRC with S2 transmitting to S1
via R yields the optimal $\mv{A}$ to maximize $r_{21}$ in
(\ref{eq:rate 21}) as $\mv{A}_{21}=c_{21}\mv{h}_1^*\mv{h}_2^H$,
where $c_{21}$ is a constant related to $P_R$. Similarly, the
optimal $\mv{A}$ to maximize $r_{12}$ in (\ref{eq:rate 12}) for the
OWRC from S1 to S2 via R is in the form of
$\mv{A}_{12}=c_{12}\mv{h}_2^*\mv{h}_1^H$. It then follows that
$\mv{A}_{21}$ differs from $\mv{A}_{12}$ unless
$\mv{h}_1=\upsilon\mv{h}_2$ for some constant $\upsilon$, i.e.,
$\mv{h}_1$ and  $\mv{h}_2$ are parallel. Therefore, relay
beamforming designs for the OWRC with {\it separate unidirectional}
transmissions in general can not be applied to the TWRC with {\it
simultaneous bidirectional} transmissions. As observed from Theorem
\ref{theorem:optimal beamforming}, the optimal relay beamforming
matrix for TWRC lies in the space spanned by both $\mv{h}_1$ and
$\mv{h}_2$.
\end{remark}

Let $\mv{g}_1=\mv{U}^H\mv{h}_1\in\mathbb{C}^{2\times1}$ and
$\mv{g}_2=\mv{U}^H\mv{h}_2\in\mathbb{C}^{2\times1}$ be the
``effective'' channels from S1 to R and from S2 to R, respectively,
by applying the optimal structure of $\mv{A}$ given in
(\ref{eq:optimal A}). Similarly, $\mv{g}_1^T$ and $\mv{g}_2^T$
become the effective channels from R to S1 and S2, respectively.
$\mathcal{R}(p_1,p_2,P_R)$ in (\ref{eq:rate region}) can then be
equivalently re-expressed as
{\small \begin{align}\label{eq:rate
region 2} \bigcup_{\mv{B}: ~p_R\left(\mv{B}\right)\leq P_R} \bigg\{
& (r_{21},r_{12}): r_{21} \leq \frac{1}{2}\log_{2} \left (1 +
\frac{|\mv{g}_1^T\mv{B}\mv{g}_2|^2 p_2 }{ \|\mv{B}^H\mv{g}_1^*\|^2 +
1}\right ), \nonumber \\ & r_{12} \leq \frac{1}{2}\log_{2} \left (1
+ \frac{|\mv{g}_2^T\mv{B}\mv{g}_1|^2 p_1 }{ \|\mv{B}^H\mv{g}_2^*\|^2
+ 1}\right ) \bigg\}
\end{align}}where  $p_R(\mv{B})=\|\mv{B}\mv{g}_1 \|^{2} p_{1} + \|
\mv{B}\mv{g}_2 \|^{2} p_2 + \mathtt{tr} (\mv{B}\mv{B}^{H})$. Note
that the not-yet-determined parameter in (\ref{eq:rate region 2}) is
$\mv{B}$. Since $\mv{B}$ has 4 complex-valued variables as compared
to $M^2$ in $\mv{A}$, the complexity for searching the optimal
$\mv{B}$ corresponding to a particular boundary rate-pair of
$\mathcal{R}(p_1,p_2,P_R)$ is reduced when $M>2$. Using Theorem
\ref{theorem:optimal beamforming} and (\ref{eq:rate region 2}),
optimal structures of $\mv{A}$ can be further simplified in the
following two special cases, which are {\it Case I}:
$\mv{h}_1\bot\mv{h}_2$, i.e., $\mv{h}_1^H\mv{h}_2=0$; and {\it Case
II}: $\mv{h}_1\parallel \mv{h}_2$, i.e., $\mv{h}_1=\upsilon\mv{h}_2$
with $\upsilon$ being a constant.
\begin{lemma}\label{lemma:zero correlation}
In the case of $\mv{h}_1\bot\mv{h}_2$, the optimal structure of
$\mv{A}$ is in the form of $\mv{A}=\mv{U}^*\left[{\footnotesize
\begin{array}{cc}
0 & c\\
d & 0 \end{array} }\right]\mv{U}^H$, with $c\geq 0$ and $d\geq 0$.
\end{lemma}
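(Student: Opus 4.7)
The plan is to specialize Theorem \ref{theorem:optimal beamforming} to the orthogonal case and then show the two diagonal entries of the reduced matrix $\mv{B}$ are wasteful. First, I would use $\mv{h}_1^H\mv{h}_2 = 0$ to pin down a convenient SVD of $\mv{H}_{\rm UL}$: because the Gram matrix $\mv{H}_{\rm UL}^H \mv{H}_{\rm UL} = \mathtt{diag}(\|\mv{h}_1\|^2,\|\mv{h}_2\|^2)$ is already diagonal, I may take (up to relabeling to enforce $\sigma_1\ge\sigma_2$) $\mv{V}$ as a permutation and $\mv{U} = [\mv{h}_1/\|\mv{h}_1\|, \mv{h}_2/\|\mv{h}_2\|]$. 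Consequently the effective channels become the axis vectors $\mv{g}_1 = \mv{U}^H\mv{h}_1 = (\|\mv{h}_1\|,0)^T$ and $\mv{g}_2 = \mv{U}^H\mv{h}_2 = (0,\|\mv{h}_2\|)^T$.

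Next, I would substitute this structure into the reduced rate/power expressions in (\ref{eq:rate region 2}). Writing $\mv{B} = [b_{ij}]$, direct computation gives $|\mv{g}_1^T\mv{B}\mv{g}_2|^2 = \|\mv{h}_1\|^2\|\mv{h}_2\|^2 |b_{12}|^2$ and $|\mv{g}_2^T\mv{B}\mv{g}_1|^2 = \|\mv{h}_1\|^2\|\mv{h}_2\|^2 |b_{21}|^2$, while the noise denominators read $\|\mv{B}^H\mv{g}_1^*\|^2 = \|\mv{h}_1\|^2(|b_{11}|^2+|b_{12}|^2)$ and $\|\mv{B}^H\mv{g}_2^*\|^2 = \|\mv{h}_2\|^2(|b_{21}|^2+|b_{22}|^2)$, and finally $p_R(\mv{B})$ becomes a nonnegative linear combination of $|b_{ij}|^2$ with all four coefficients strictly positive.

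The key observation is then transparent: the diagonal entries $b_{11}$ and $b_{22}$ do not appear in any signal numerator, yet each increases one noise denominator and the relay power. Formally, I would argue that if $\mv{B}$ is feasible and achieves $(r_{21},r_{12})$, then the matrix $\tilde{\mv{B}}$ obtained by zeroing $b_{11}$ and $b_{22}$ is still feasible and achieves a rate-pair at least as large componentwise (since both numerators are unchanged while the $|b_{11}|^2$ term is removed from the $r_{21}$ denominator and the $|b_{22}|^2$ term from the $r_{12}$ denominator), and moreover $p_R(\tilde{\mv{B}}) \le p_R(\mv{B})$ with strict inequality unless the diagonal was already zero, so the residual power budget can be used to scale $\tilde{\mv{B}}$ up and strictly dominate any boundary point attained with nonzero diagonal. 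Finally, the phases of $b_{12}$ and $b_{21}$ drop out of every quantity (numerators, denominators, and power), so I may replace them by $c := |b_{12}| \ge 0$ and $d := |b_{21}| \ge 0$ without any loss, yielding the claimed form.

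The only genuinely delicate step is the second one: one must verify that zeroing $b_{11}$ does not inadvertently hurt $r_{12}$ (and symmetrically for $b_{22}$ and $r_{21}$). This is where the orthogonality assumption is crucial, because it makes the cross terms in $\mv{B}\mv{g}_1$, $\mv{B}\mv{g}_2$, $\mv{B}^H\mv{g}_1^*$, $\mv{B}^H\mv{g}_2^*$ decouple into the pairs $(b_{11},b_{21})$, $(b_{12},b_{22})$, $(b_{11},b_{12})$, $(b_{21},b_{22})$ respectively, so that each diagonal entry touches only one of the two rate constraints and can be removed without collateral damage. Outside the orthogonal case this decoupling fails, which is consistent with the lemma being stated specifically for $\mv{h}_1\bot\mv{h}_2$.
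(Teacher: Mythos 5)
Your proposal is correct and follows essentially the same route as the paper's proof in Appendix B: orthogonality makes the effective channels $\mv{g}_1,\mv{g}_2$ axis-aligned, so the diagonal entries of $\mv{B}$ appear only in the noise denominators and the relay power (never in a signal numerator) and can be zeroed without loss, after which the irrelevance of the phases of the off-diagonal entries gives $c,d\ge 0$. Your added care about verifying that zeroing $b_{11}$ cannot hurt $r_{12}$ (and the explicit power-rescaling domination step) only makes explicit what the paper states tersely.
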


\begin{proof}
Please refer to Appendix \ref{appendix:proof zero correlation}.
\end{proof}

\begin{lemma}\label{lemma:unit correlation}
In the case of $\mv{h}_1\parallel\mv{h}_2$, the optimal structure of
$\mv{A}$ is in the form of $\mv{A}=\mv{U}^*\left[{\footnotesize
\begin{array}{cc}
a & 0\\
0 & 0 \end{array} }\right]\mv{U}^H$, with $a\geq 0$.
\end{lemma}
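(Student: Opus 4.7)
The plan is to apply Theorem~\ref{theorem:optimal beamforming} to restrict attention to $\mv{A}=\mv{U}^*\mv{B}\mv{U}^H$ with $\mv{B}\in\mathbb{C}^{2\times 2}$, and then use the rank-one degeneracy of $\mv{H}_{\rm UL}$ under $\mv{h}_1\parallel\mv{h}_2$ to collapse $\mv{B}$ to a single nonnegative scalar.

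First, I would observe that if $\mv{h}_1=\upsilon\mv{h}_2$, then $\mv{H}_{\rm UL}=[\mv{h}_1,\mv{h}_2]$ has rank one, so $\sigma_2=0$ and the second column $\mv{u}_2$ of $\mv{U}$ is orthogonal to both $\mv{h}_1$ and $\mv{h}_2$. Consequently the effective channels take the degenerate form $\mv{g}_1=\mv{U}^H\mv{h}_1=[g_1,0]^T$ and $\mv{g}_2=\mv{U}^H\mv{h}_2=[g_2,0]^T$ for some scalars $g_1,g_2\in\mathbb{C}$. This sparsity is the key structural input to the rest of the argument.

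Writing $\mv{B}=\left[{\footnotesize\begin{array}{cc} b_{11}&b_{12}\\ b_{21}&b_{22}\end{array}}\right]$ and substituting into (\ref{eq:rate region 2}), I would then perform three elementary inspections. (i) Both signal terms $\mv{g}_1^T\mv{B}\mv{g}_2$ and $\mv{g}_2^T\mv{B}\mv{g}_1$ equal $g_1 g_2 b_{11}$ and so depend only on $b_{11}$, whereas the noise terms $\|\mv{B}^H\mv{g}_i^*\|^2=|g_i|^2(|b_{11}|^2+|b_{12}|^2)$ pick up an extra $|b_{12}|^2$; hence $b_{12}$ strictly hurts both rates and also contributes to the trace in $p_R(\mv{B})$. (ii) The entry $b_{21}$ never appears in either SNR but does appear in $\|\mv{B}\mv{g}_i\|^2=|g_i|^2(|b_{11}|^2+|b_{21}|^2)$ and in $\mathtt{tr}(\mv{B}\mv{B}^H)$, so it purely wastes relay power. (iii) The entry $b_{22}$ appears nowhere in the SNRs and only in $\mathtt{tr}(\mv{B}\mv{B}^H)$, so it is likewise pure waste. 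Combining (i)--(iii), $b_{12}=b_{21}=b_{22}=0$ must hold at any optimum. Finally, since both rates and $p_R(\mv{B})$ depend on $b_{11}$ only through $|b_{11}|^2$, its phase is immaterial and we may take $b_{11}=a\geq 0$, giving the claimed structure.

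I do not anticipate a real obstacle: once Theorem~\ref{theorem:optimal beamforming} is invoked and the rank-one degeneracy of $\mv{H}_{\rm UL}$ is translated into the sparsity pattern of $\mv{g}_1$ and $\mv{g}_2$, the entire lemma reduces to a short bookkeeping argument on $2\times 2$ matrices, with each zeroing step justified by an explicit monotonicity in the corresponding entry of $\mv{B}$.
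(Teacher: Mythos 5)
Your proposal is correct and follows essentially the same route as the paper's proof: exploit the rank-one degeneracy $\mv{g}_1=[g_1,0]^T$, $\mv{g}_2=[g_2,0]^T$, substitute into (\ref{eq:rate region 2}), and zero out the three off-role entries of $\mv{B}$ by entrywise monotonicity before absorbing the phase of the surviving entry. The only difference is that you spell out the bookkeeping that the paper compresses into a reference to the proof of Lemma \ref{lemma:zero correlation}.
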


\begin{proof}
Please refer to Appendix \ref{appendix:proof unit correlation}.
\end{proof}

Note that in other cases of $\mv{h}_1$ and $\mv{h}_2$ beyond the
above two, we do not have further simplified structures for
$\mv{B}$, or $\mv{A}$ upon that in (\ref{eq:optimal A}). Thus, in
general we need to resort to optimization techniques to obtain the
$2\times 2$ matrix $\mv{B}$ for each boundary rate-pair of
$\mathcal{R}(p_1,p_2,P_R)$, as will be shown next.

\subsection{Optimization Problems}

Since $\mathcal{R}(p_1,p_2,P_R)$ in (\ref{eq:rate region 2}) is the
same as that in (\ref{eq:rate region}), we use (\ref{eq:rate region
2}) in this subsection to characterize all the boundary rate-pairs
of $\mathcal{R}(p_1,p_2,P_R)$. A commonly used method to
characterize different rate-tuples on the boundary of a multiuser
capacity region is via solving a sequence of {\it weighted sum-rate
maximization} (WSRMax) problems, each for a different (nonnegative)
rate weight vector of users. In the case of TWRC, let
$\mv{w}=[w_{21}, w_{12}]^T$ be the
 weight vector, where $w_{21}$ and $w_{12}$ are
the ``rate rewards'' for $r_{21}$ and $r_{12}$, respectively. From
(\ref{eq:rate region 2}), we can express the WSRMax problem to
determine a particular boundary rate-pair of
$\mathcal{R}(p_1,p_2,P_R)$ as
\begin{align}\label{eq:WSRMax}
\mathop{\mathtt{Max.}}_{\mv{B}} & ~~\frac{w_{21}}{2}\log_{2} \left
(1 + \frac{|\mv{g}_1^T\mv{B}\mv{g}_2|^2 p_2 }{
\|\mv{B}^H\mv{g}_1^*\|^2 + 1}\right) \nonumber  \\ &~~~~
+\frac{w_{12}}{2}\log_{2} \left (1 +
\frac{|\mv{g}_2^T\mv{B}\mv{g}_1|^2 p_1 }{ \|\mv{B}^H\mv{g}_2^*\|^2 +
1}\right ) \nonumber \\
\mathtt{s.t.} &~~\|\mv{B}\mv{g}_1 \|^{2} p_{1} + \| \mv{B}\mv{g}_2
\|^{2} p_2 + \mathtt{tr} (\mv{B}\mv{B}^{H})\leq P_R.
\end{align}
In the above problem, although the constraint is convex, the
objective function is not a concave function of $\mv{B}$. As a
result, this problem is non-convex \cite{Boydbook}, and is thus
difficult to solve via standard convex optimization techniques.

Therefore, we need to resort to an alternative method of WSRMax to
characterize $\mathcal{R}(p_1,p_2,P_R)$. In \cite{Mohseni}, an
interesting concept so-called {\it rate profile} was introduced to
efficiently characterize boundary rate-tuples of a capacity region.
A rate profile regulates the ratio between each user's rate, $r_k$,
and their sum-rate, $R_{\rm sum}=\sum_{k=1}^K r_k$, to be a
predefined value $\alpha_k$, i.e., $\frac{r_k}{R_{\rm
sum}}=\alpha_k, k=1,\ldots,K$, with $K$ denoting the number of
users. The rate-profile vector is then defined as
$\mv{\alpha}=[\alpha_1,\ldots,\alpha_K]^T$. For a given
$\mv{\alpha}$, if $R_{\rm sum}$ is maximized subject to the
rate-profile constraint specified by $\mv{\alpha}$, the solution
rate-tuple, $R_{\rm sum}\mv{\alpha}$, can then be geometrically
viewed as the intersection of a straight line specified by a slope
of $\mv{\alpha}$ and passing through the origin of the capacity
region, with the capacity region boundary. Thereby, with different
$\mv{\alpha}$'s, all the boundary rate-tuples of the capacity region
can be obtained.

Next, we show that by applying the above method based on rate
profile, boundary rate-pairs of $\mathcal{R}(p_1,p_2,P_R)$ can be
efficiently characterized. Since in our case
$\mathcal{R}(p_1,p_2,P_R)$ lies in a two-dimensional space, we can
express the rate-profile vector as $\mv{\alpha}=[\alpha_{21},
\alpha_{12}]^T$, where $\alpha_{21}=\frac{r_{21}}{R_{\rm sum}}$,
$\alpha_{12}=\frac{r_{12}}{R_{\rm sum}}$, and $R_{\rm
sum}=r_{21}+r_{12}$.  For a fixed $\mv{\alpha}$, we consider the
following sum-rate maximization problem:
\begin{align}\label{eq:SRMax}
\mathop{\mathtt{Max.}}_{R_{\rm sum}, \mv{B}}& ~~ R_{\rm sum} \nonumber \\
\mathtt{s.t.} & ~~ \frac{1}{2}\log_{2} \left (1 +
\frac{|\mv{g}_1^T\mv{B}\mv{g}_2|^2 p_2 }{ \|\mv{B}^H\mv{g}_1^*\|^2 +
1}\right)\geq \alpha_{21}R_{\rm sum} \nonumber \\
& ~~ \frac{1}{2}\log_{2} \left (1 +
\frac{|\mv{g}_2^T\mv{B}\mv{g}_1|^2 p_1 }{ \|\mv{B}^H\mv{g}_2^*\|^2 +
1}\right )\geq \alpha_{12}R_{\rm
sum} \nonumber \\
& ~~ \|\mv{B}\mv{g}_1 \|^{2} p_{1} + \| \mv{B}\mv{g}_2 \|^{2} p_2 +
\mathtt{tr} (\mv{B}\mv{B}^{H})\leq P_R.
\end{align}
After solving the above problem, solution of $\mv{B}$ can be used to
construct the optimal relay beamforming matrix $\mv{A}$ according to
(\ref{eq:optimal A}), and the solution $R_{\rm sum}\mv{\alpha}$
becomes the rate-pair on the boundary of $\mathcal{R}(p_1,p_2,P_R)$
corresponding to the given $\mv{\alpha}$. To solve problem
(\ref{eq:SRMax}), we first consider the following relay power
minimization problem subject to rate constraints:
\begin{align}\label{eq:PowerMin}
\mathop{\mathtt{Min.}}_{\mv{B}}& ~~ p_R:=\|\mv{B}\mv{g}_1 \|^{2}
p_{1} +\| \mv{B}\mv{g}_2 \|^{2} p_2 + \mathtt{tr} (\mv{B}\mv{B}^{H})\nonumber \\
\mathtt{s.t.} & ~~ \frac{1}{2}\log_{2} \left (1 +
\frac{|\mv{g}_1^T\mv{B}\mv{g}_2|^2 p_2 }{ \|\mv{B}^H\mv{g}_1^*\|^2 +
1}\right)\geq \alpha_{21}r \nonumber \\
&~~ \frac{1}{2}\log_{2} \left (1 +
\frac{|\mv{g}_2^T\mv{B}\mv{g}_1|^2 p_1 }{ \|\mv{B}^H\mv{g}_2^*\|^2 +
1}\right )\geq \alpha_{12}r.
\end{align}
If the above problem is feasible, its optimal value, denoted by
$p_R^{\star}$, will be the minimum relay power required to support
the given rate-pair $r\mv{\alpha}$; otherwise, there is no finite
relay power that can support this rate-pair, and for convenience we
denote $p_R^{\star}=+\infty$ in this case. Problems (\ref{eq:SRMax})
and (\ref{eq:PowerMin}) are related as follows. If for some given
$r$, $\mv{\alpha}$ and $P_R$, the optimal value of problem
(\ref{eq:PowerMin}) satisfies that $p_R^{\star}>P_R$, it follows
that $r$ must be an infeasible solution of $R_{\rm sum}$ in problem
(\ref{eq:SRMax}), i.e., the rate-pair $r\mv{\alpha}$ must fall
outside $\mathcal{R}(p_1,p_2,P_R)$ along the line specified by slope
$\mv{\alpha}$; if $p_R^{\star}\leq P_R$, it follows that $r$ is a
feasible solution of $R_{\rm sum}$ and thus $r\mv{\alpha}$ must be
within $\mathcal{R}(p_1,p_2,P_R)$. Based on the above observations,
we obtain the following algorithm for problem (\ref{eq:SRMax}), for
which a rigorous proof is given in Appendix \ref{appendix:proof
algorithm}.
\begin{algorithm}\label{algorithm:SR max}
$ $
\begin{itemize}
\item {\bf Given} $R_{\rm sum} \in [0, \bar{R}_{\rm
sum}]$, $\mv{\alpha}$.
\item {\bf Initialize} $r_{\min}=0$, $r_{\max}=\bar{R}_{\rm sum}$.
\item {\bf Repeat}
\begin{itemize}
\item[1.] Set $r\leftarrow\frac{1}{2}(r_{\min}+r_{\max})$.
\item[2.] Solve problem (\ref{eq:PowerMin}) to obtain its optimal value, $p_R^{\star}$.
\item[3.] Update $r$ by the bisection method \cite{Boydbook}: If
$p_R^{\star}\leq P_R$, set $r_{\min}\leftarrow r$; otherwise,
$r_{\max}\leftarrow r$.
\end{itemize}
\item {\bf Until} $r_{\max}-r_{\min}\leq\delta_{r}$, where $\delta_{r}$ is a small positive constant
to control the algorithm accuracy. The converged value of $r_{\min}$
is the optimal solution of $R_{\rm sum}$ in (\ref{eq:SRMax}).
\end{itemize}
\end{algorithm}

Note that $\bar{R}_{\rm sum}$ is an upper bound on the optimal
solution of $R_{\rm sum}$ in (\ref{eq:SRMax}) for the given
$\mv{\alpha}$. In Section \ref{sec:suboptimal schemes} (see Remark
\ref{remark:sum rate upper bound}), we obtain such an upper bound
that is valid for all possible values of $\mv{\alpha}$. In the next
subsection, we will address the remaining part in Algorithm
\ref{algorithm:SR max} on how to solve problem (\ref{eq:PowerMin})
in Step 2.

\subsection{Power Minimization under SNR Constraints}

Denote $\gamma_{1}$ and $\gamma_2$ as the SNRs at the receivers of
S1 and S2, respectively, which are defined as
\begin{eqnarray}\label{eq:SNR}
\gamma_1=\frac{|\mv{g}_1^T\mv{B}\mv{g}_2|^2 p_2 }{
\|\mv{B}^H\mv{g}_1^*\|^2+1}, \
\gamma_2=\frac{|\mv{g}_2^T\mv{B}\mv{g}_1|^2 p_1 }{
\|\mv{B}^H\mv{g}_2^*\|^2 + 1}.
\end{eqnarray}
Let $\bar{\gamma}_1=2^{2\alpha_{21}r}-1$ and
$\bar{\gamma}_2=2^{2\alpha_{12}r}-1$ be the equivalent SNR targets
at S1 and S2 to guarantee the given rate constraints. Then, it is
observed that the rate constraints in (\ref{eq:PowerMin}) can be
expressed as the corresponding SNR constraints at S1 and S2,
$\gamma_1\geq \bar{\gamma}_1$ and $\gamma_2\geq \bar{\gamma}_2$,
respectively. Using (\ref{eq:SNR}), problem (\ref{eq:PowerMin}) can
be recast as the following equivalent problem:
\begin{align}\label{eq:PowerMin SNR Constraints}
\mathop{\mathtt{Min.}}_{\mv{B}} & ~~ p_R:=\|\mv{B}\mv{g}_1 \|^{2}
p_{1} + \| \mv{B}\mv{g}_2 \|^{2} p_2 + \mathtt{tr} (\mv{B}\mv{B}^{H})\nonumber \\
\mathtt{s.t.} & ~~ |\mv{g}_1^T\mv{B}\mv{g}_2|^2 \geq
\frac{\bar{\gamma}_1}{p_2}\|\mv{B}^H\mv{g}_1^*\|^2+\frac{\bar{\gamma}_1}{p_2} \nonumber \\
& ~~ |\mv{g}_2^T\mv{B}\mv{g}_1|^2 \geq
\frac{\bar{\gamma}_2}{p_1}\|\mv{B}^H\mv{g}_2^*\|^2 +
\frac{\bar{\gamma}_2}{p_1}.
\end{align}
Note that the above problem may be of practical interest itself,
since it is relevant when certain prescribed transmission
quality-of-service (QoS) requirements in terms of receiver SNRs need
to be fulfilled at S1 and S2. For the convenience of analysis, we
modify the above problem as follows. First, let
$\mathtt{Vec}(\mv{Q})$ be a $K^2\times 1$ vector associated with a
$K\times K$ square matrix $\mv{Q}=[\mv{q}_1,\ldots,\mv{q}_K]^T$,
where $\mv{q}_k\in\mathbb{C}^{K \times 1}, k=1,\ldots,K$, by the
rule of $\mathtt{Vec}(\mv{Q})=[\mv{q}_1^T,\ldots, \mv{q}_K^T]^T$.
Next, with $\mv{b}=\mathtt{Vec}(\mv{B})$ and
$\mv{\Theta}=p_1\mv{g}_1\mv{g}_1^H+p_2\mv{g}_2\mv{g}_2^H+\mv{I}$, we
can express $p_R$ in the objective function of (\ref{eq:PowerMin SNR
Constraints}) as $p_R=\mathtt{tr}(\mv{B\Theta
B}^H)=\|\mv{\Phi}\mv{b}\|^2$, where
$\mv{\Phi}=(\mathtt{diag}(\mv{\Theta}^T,\mv{\Theta}^T))^{\frac{1}{2}}$.
Similarly, let
$\mv{f}_1=\mathtt{Vec}\left(\mv{g}_1\mv{g}_2^T\right)$ and
$\mv{f}_2=\mathtt{Vec}\left(\mv{g}_2\mv{g}_1^T\right)$. Then, from
(\ref{eq:PowerMin SNR Constraints}) it follows that
$|\mv{g}_1^T\mv{B}\mv{g}_2|^2=|\mv{f}_1^T\mv{b}|^2$ and
$|\mv{g}_2^T\mv{B}\mv{g}_1|^2=|\mv{f}_2^T\mv{b}|^2$. Furthermore, by
defining
\begin{eqnarray*}
\mv{G}_i=\left[\begin{array}{cccc}
\mv{g}_i(1,1) & 0 & \mv{g}_i(2,1) & 0 \\
0 &  \mv{g}_i(1,1)  & 0 & \mv{g}_i(2,1) \end{array} \right], \
i=1,2,
\end{eqnarray*}
we have $\|\mv{B}^H\mv{g}_i^*\|^2=\|\mv{G}_i\mv{b}\|^2, i=1,2$.
Using the above transformations, (\ref{eq:PowerMin SNR Constraints})
can be rewritten as
\begin{align}\label{eq:PowerMin SNR Constraints 2}
\mathop{\mathtt{Min.}}_{\mv{b}} & ~~ p_R:=\|\mv{\Phi}\mv{b}\|^2\nonumber \\
\mathtt{s.t.} & ~~ |\mv{f}_1^T\mv{b}|^2 \geq
\frac{\bar{\gamma}_1}{p_2}\|\mv{G}_1\mv{b}\|^2+\frac{\bar{\gamma}_1}{p_2} \nonumber \\
&~~ |\mv{f}_2^T\mv{b}|^2 \geq
\frac{\bar{\gamma}_2}{p_1}\|\mv{G}_2\mv{b}\|^2 +
\frac{\bar{\gamma}_2}{p_1}.
\end{align}
The above problem can be shown to be still non-convex. However, in
the following, we show that the exact optimal solution could be
obtained via a relaxed semidefinite programming
(SDP)~\cite{Boydbook} problem.

We first define $\mv{E}_0=\mv{\Phi}^H\mv{\Phi}$,
$\mv{E}_1=\frac{p_2}{\bar{\gamma}_1}\mv{f}_1^*\mv{f}_1^T-\mv{G}_1^H\mv{G}_1$,
and
$\mv{E}_2=\frac{p_1}{\bar{\gamma}_2}\mv{f}_2^*\mv{f}_2^T-\mv{G}_2^H\mv{G}_2$.
Since standard SDP formulations only involve real variables and
constants, we introduce a new real matrix variable as
$\mv{X}=[\mv{b}_R; \mv{b}_I]\times[\mv{b}_R; \mv{b}_I]^T$, where
$\mv{b}_R=Re(\mv{b})$ and $\mv{b}_I=Im(\mv{b})$ are the real and
imaginary parts of $\mv{b}$, respectively. To rewrite the norm
representations at~(\ref{eq:PowerMin SNR Constraints 2}) in terms of
$\mv{X}$, we need to rewrite $\mv{E}_0$, $\mv{E}_1$, and $\mv{E}_2$,
as expanded matrices $\mv{F}_0$, $\mv{F}_1$, and $\mv{F}_2$,
respectively, in terms of their real and imaginary parts.
Specifically, to write out $\mv{F}_0$, we first define the short
notations $\mv{\Phi}_R=Re(\mv{\Phi})$ and
$\mv{\Phi}_I=Im(\mv{\Phi})$; then we have
\begin{eqnarray}
\mv{F}_0&=& \left[\begin{array}{cc}
\mv{\Phi}_R^T\mv{\Phi}_R+\mv{\Phi}_I^T\mv{\Phi}_I & \mv{\Phi}_I^T\mv{\Phi}_R-\mv{\Phi}_R^T\mv{\Phi}_I\\
\mv{\Phi}_R^T\mv{\Phi}_I-\mv{\Phi}_I^T\mv{\Phi}_R & \mv{\Phi}_R^T\mv{\Phi}_R+\mv{\Phi}_I^T\mv{\Phi}_I \end{array}\right]. \nonumber
\end{eqnarray}
The expanded matrices $\mv{F}_1$ and $\mv{F}_2$ can be generated
from $\mv{E}_1$ and $\mv{E}_2$ in a similar way, where the two terms
in $\mv{E}_1$ or $\mv{E}_2$ could first be expanded separately then
summed together.

As such, problem (\ref{eq:PowerMin SNR Constraints 2}) can be
equivalently rewritten as
\begin{align}\label{eq:PowerMin SNR SDR}
\mathop{\mathtt{Min.}}_{\mv{X}} &~~ p_R:=\mathtt{tr}(\mv{F}_0\mv{X}) \nonumber \\
\mathtt{s.t.} &~~ \mathtt{tr}(\mv{F}_1\mv{X})\geq 1, ~
\mathtt{tr}(\mv{F}_2\mv{X})\geq 1, ~ \mv{X}\succeq 0, \nonumber \\
&~~ \mathtt{rank}(\mv{X})=1.
\end{align}
The above problem is still not convex given the last rank-one
constraint. However, if we remove such a constraint, this problem is
relaxed into a convex SDP problem as shown below.
\begin{align}\label{eq:PowerMin_SDP}
\mathop{\mathtt{Min.}}_{\mv{X}} &~~ p_R:=\mathtt{tr}(\mv{F}_0\mv{X}) \nonumber \\
\mathtt{s.t.} &~~ \mathtt{tr}(\mv{F}_1\mv{X})\geq 1, ~
\mathtt{tr}(\mv{F}_2\mv{X})\geq 1, ~ \mv{X}\succeq 0.
\end{align}
Given the convexity of the above SDP problem, the optimal solution
could be efficiently found by various convex optimization
methods~\cite{Boydbook}. Note that if problem
(\ref{eq:PowerMin_SDP}) is infeasible, so is the more restricted
problem (\ref{eq:PowerMin SNR SDR}). Thus, we assume w.l.o.g. that
problem (\ref{eq:PowerMin_SDP}) is feasible in the following
discussions. SDP relaxation usually leads to an optimal $\mv{X}$ for
problem (\ref{eq:PowerMin_SDP}) that is of rank $r$ with $r\ge{1}$,
which makes it impossible to reconstruct the exact optimal solution
for problem~(\ref{eq:PowerMin SNR Constraints 2}) when $r>1$. A
commonly adopted method in the literature to obtain a feasible
rank-one (but in general suboptimal) solution from the solution of
SDP relaxation is via ``randomization'' (see, e.g., \cite{Luo06} and
references therein). Fortunately, we show in the following that with
the special structure in problem~(\ref{eq:PowerMin_SDP}), we could
efficiently reconstruct an optimal rank-one solution from its
optimal solution that could be of rank $r$ with $r>1$, based on some
elegant results derived for SDP relaxation in~\cite{Ye_Zhang}. In
other words, we could obtain the exact optimal solution for the
non-convex problem in~(\ref{eq:PowerMin SNR SDR}) without losing any
optimality, and as efficiently as solving a convex problem.

\begin{theorem}\label{theorem:rank_one}
Assume that an optimal solution $\mv{X}^\star$ of rank $r>1$ has
been found for problem~(\ref{eq:PowerMin_SDP}), we could efficiently
construct another feasible optimal solution $\mv{X}^{\star\star}$ of
rank one, i.e., $\mv{X}^{\star\star}$ is the optimal solution for
both (\ref{eq:PowerMin SNR SDR}) and (\ref{eq:PowerMin_SDP}).
\end{theorem}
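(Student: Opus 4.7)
The plan is to reduce the rank of $\mv{X}^\star$ by at least one per step while keeping the iterate feasible and optimal for~(\ref{eq:PowerMin_SDP}), iterating until rank one is reached, in the spirit of the SDP rank-reduction recipe of~\cite{Ye_Zhang}.

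I would first factor the given rank-$r$ solution as $\mv{X}^\star=\mv{V}\mv{V}^T$ with $\mv{V}\in\mathbb{R}^{n\times r}$ of full column rank, and consider the one-parameter family $\mv{X}(t)=\mv{V}(\mv{I}_r+t\mv{\Delta})\mv{V}^T$ indexed by a real symmetric $\mv{\Delta}\in\mathbb{R}^{r\times r}$ and a scalar $t$. Any such $\mv{X}(t)$ is PSD as long as $\mv{I}_r+t\mv{\Delta}\succeq\mv{0}$, and $\mathtt{rank}(\mv{X}(t))=\mathtt{rank}(\mv{I}_r+t\mv{\Delta})$. To be free to move in either sign of $t$ (which is needed to force a zero eigenvalue of $\mv{I}_r+t\mv{\Delta}$ and hence drop the rank), the perturbation must preserve every active constraint value exactly, i.e.
\[
\mathtt{tr}\bigl(\mv{V}^T\mv{F}_i\mv{V}\mv{\Delta}\bigr)=0\qquad\text{for every active }i\in\{1,2\},
\]
while inactive constraints survive automatically for $|t|$ small by continuity.

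Optimality of $\mv{X}(t)$ then comes for free from the KKT conditions, so no separate equation on $\mv{\Delta}$ is needed to keep the objective put. For the convex SDP~(\ref{eq:PowerMin_SDP}) there exist multipliers $\mu_1,\mu_2\geq 0$ and a PSD slack $\mv{S}$ with $\mv{F}_0=\mu_1\mv{F}_1+\mu_2\mv{F}_2+\mv{S}$ and complementary slackness $\mv{S}\mv{X}^\star=\mv{0}$, the latter forcing $\mv{S}\mv{V}=\mv{0}$. Projecting the stationarity relation onto the range of $\mv{V}$ gives $\mv{V}^T\mv{F}_0\mv{V}=\mu_1\mv{V}^T\mv{F}_1\mv{V}+\mu_2\mv{V}^T\mv{F}_2\mv{V}$, so $\mathtt{tr}(\mv{V}^T\mv{F}_0\mv{V}\mv{\Delta})=0$ is already implied by the active-constraint equations above, and the objective value of $\mv{X}(t)$ is invariant.

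It then suffices to produce a nonzero symmetric $\mv{\Delta}$ satisfying those conditions: since the space of real symmetric $r\times r$ matrices has dimension $r(r+1)/2\geq 3$ for every $r\geq 2$, while we impose at most two linear conditions, such a $\mv{\Delta}$ exists. Taking $t^\star$ of smallest magnitude (with appropriate sign chosen from a real eigenvalue of $\mv{\Delta}$) for which $\mv{I}_r+t^\star\mv{\Delta}$ is PSD but singular, the matrix $\mv{V}(\mv{I}_r+t^\star\mv{\Delta})\mv{V}^T$ is PSD, feasible, optimal, and of rank at most $r-1$. Iterating brings the rank down to one (it cannot reach zero, because $\mv{X}=\mv{0}$ violates $\mathtt{tr}(\mv{F}_i\mv{X})\geq 1$), yielding $\mv{X}^{\star\star}$; each step costs only a small linear system and an $r\times r$ eigendecomposition, so the procedure is efficient. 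The main delicacy is the KKT-based cancellation in the preceding paragraph: without it, the $r=2$ case would face three linear conditions on a three-dimensional matrix space and generically admit only the trivial $\mv{\Delta}=\mv{0}$, blocking the descent from rank~$2$ to rank~$1$.
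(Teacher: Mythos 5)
Your route is genuinely different from the paper's. The paper first shows that at least one of the two trace constraints must be active at $\mv{X}^\star$, invokes a decomposition $\mv{X}^\star=\sum_{i=1}^r\mv{x}_i\mv{x}_i^T$ from \cite{Ye_Zhang} in which every rank-one piece satisfies $\mv{x}_i^T(\mv{F}_2-\mv{F}_1)\mv{x}_i\geq 0$, observes that this makes the second constraint redundant in the induced linear program over the weights $t_j$, and then extracts a basic optimal solution of that LP, which can have only one positive weight and hence yields a rank-one $\mv{X}^{\star\star}$ in a single shot. You instead run a Pataki/Barvinok-style iterative rank reduction: perturb within the range space of $\mv{X}^\star$ as $\mv{V}(\mv{I}_r+t\mv{\Delta})\mv{V}^T$ and use the KKT identity $\mv{V}^T\mv{F}_0\mv{V}=\mu_1\mv{V}^T\mv{F}_1\mv{V}+\mu_2\mv{V}^T\mv{F}_2\mv{V}$ (from $\mv{S}\mv{V}=\mv{0}$) to get objective invariance for free. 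That observation is correct and is exactly what rescues the dimension count in the $r=2$ case; your approach also does not rely on the sign condition $\mathtt{tr}((\mv{F}_2-\mv{F}_1)\mv{X}^\star)\geq 0$ that the paper's decomposition lemma needs. (A minor point you should add: the existence of the KKT multipliers follows from Slater's condition, which holds here because scaling up any feasible point and adding $\epsilon\mv{I}$ gives a strictly feasible one.)

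There is, however, one step that can fail as written: the treatment of an inactive constraint. You impose $\mathtt{tr}(\mv{V}^T\mv{F}_i\mv{V}\mv{\Delta})=0$ only for active $i$ and argue that inactive constraints ``survive for $|t|$ small by continuity,'' but the rank-reducing value $t^\star=-1/\lambda$ (for an extreme eigenvalue $\lambda$ of $\mv{\Delta}$) is not small, and $\mathtt{tr}(\mv{F}_2\mv{X}(t))$ is an affine function of $t$ whose slope $\mathtt{tr}(\mv{V}^T\mv{F}_2\mv{V}\mv{\Delta})$ you have not controlled. If exactly one constraint is active and the chosen $\mv{\Delta}$ happens to be definite, only one sign of $t$ reaches singularity of $\mv{I}_r+t\mv{\Delta}$, and the inactive constraint can be violated before the rank drops. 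The repair is cheap and stays inside your own dimension count: impose the orthogonality condition for both constraints regardless of activity --- two linear conditions on the space of symmetric $\mv{\Delta}$, of dimension $r(r+1)/2\geq 3$, still leave a nonzero solution --- after which both traces are constant in $t$ and every admissible $t$ preserves feasibility exactly, while complementary slackness ($\mu_i=0$ for inactive $i$) still yields $\mathtt{tr}(\mv{V}^T\mv{F}_0\mv{V}\mv{\Delta})=0$. With that modification the argument is complete and constitutes a valid alternative proof.
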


\begin{proof} Please refer to Appendix \ref{appendix:proof rank
one}.
\end{proof}

Since the above proof is self-constructive, we could write a routine
to obtain an optimal rank-one solution for problem (\ref{eq:PowerMin
SNR SDR}) from $\mv{X}^\star$, as given in the last part of Appendix
\ref{appendix:proof rank one}.

\section{Low-Complexity Relay Beamforming Schemes} \label{sec:suboptimal
schemes}

In this section, we present suboptimal relay beamforming schemes
that require lower complexity for implementation than the optimal
scheme developed in Section \ref{sec:capacity region}. Two
suboptimal beamforming structures for $\mv{A}$ are proposed as
follows:
\begin{itemize}
\item Maximal-Ratio Reception and Maximal-Ratio Transmission
(MRR-MRT):
\begin{eqnarray}\label{eq:MRR MRT}
\mv{A}_{\rm MR}=\mv{H}_{\rm DL}^H \left[\begin{array}{ll}
a_{\rm MR} & 0\\
0 & b_{\rm MR} \end{array}\right]\mv{H}_{\rm UL}^H;
\end{eqnarray}

\item Zero-Forcing Reception and Zero-Forcing Transmission
(ZFR-ZFT):\footnote{Note that the ZFR-ZFT scheme with $a_{\rm
ZF}=b_{\rm ZF}$ has also been proposed in \cite{Unger2007}, but
without detailed performance analysis.}
\begin{eqnarray}\label{eq:ZFR ZFT}
\mv{A}_{\rm ZF}=\mv{H}_{\rm DL}^{\dag} \left[\begin{array}{cc}
a_{\rm ZF} & 0\\
0 & b_{\rm ZF} \end{array}\right] \mv{H}_{\rm UL}^{\dag}.
\end{eqnarray}
\end{itemize}

Note that from (\ref{eq:channel matrix form}), it follows that
$a_{\rm x}$ and $b_{\rm x}$, ${\rm x=MR}$ or ${\rm ZF}$, $a_{\rm
x}\geq 0$ and $b_{\rm x}\geq 0$, in the above beamforming structures
play the role of balancing relay power allocations to transmissions
from S1 to S2 and from S2 to S1. MRR-MRT applies the
``matched-filter (MF)'' -based receive and transmit beamforming at R
to maximize the total signal power forwarded to S1 and S2. However,
in this scheme, R does not attempt to suppress or mitigate the
interference between S1 and S2. On the other hand, ZFR-ZFT applies
the ``zero-forcing (ZF)'' -based receive and transmit beamforming to
remove the interferences between S1 and S2 at R as well as at the
end receivers of S1 and S2. To illustrate this, we substitute
$\mv{A}_{\rm ZF}$ in (\ref{eq:ZFR ZFT}) into (\ref{eq:channel matrix
form}) to obtain $\left[\begin{array}{l} y_2(n) \\ y_1(n)
\end{array}\right]$ in the form of
\begin{eqnarray*}\label{eq:channel matrix form ZF}
\left[\begin{array}{l} a_{\rm ZF}\sqrt{p_1}s_1(n)
\\ b_{\rm ZF}\sqrt{p_2}s_2(n)\end{array}\right]+\left[\begin{array}{cc}
a_{\rm ZF} & 0\\
0 & b_{\rm ZF} \end{array}\right] \mv{H}_{\rm UL}^{\dag}
\mv{z}_R(n)+\left[\begin{array}{l} z_2(n) \\ z_1(n)
\end{array}\right].
\end{eqnarray*}
It is observed from the above that the self-interferences are
completely removed at the receivers of S1 and S2 by ZF-based relay
beamforming. Therefore, the main advantage of ZFR-ZFT over MRR-MRT
lies in that it does not need to implement the self-interference
cancelation at S1 or S2, and thus simplifies their receivers. In
general, with ANC, we know that the interference between S1 and S2
observed at R is in fact the self-interference of S1 or S2, and can
be later removed at the end receiver of S1 or S2. Thus, it is
conjectured that MRR-MRT may have a superior performance over
ZFR-ZFT for ANC-based TWRC. This conjecture is in fact true, and
will be verified in later parts of this paper via performance
analysis and simulation results.

Interestingly, the above two suboptimal beamforming schemes both
comply with the optimal beamforming structure given in
(\ref{eq:optimal A}), while their associated values of $\mv{B}$ are
in general suboptimal. This can be easily verified by rewriting
$\mv{A}_{\rm ZF}$ in (\ref{eq:MRR MRT}) and $\mv{A}_{\rm MR}$ in
(\ref{eq:ZFR ZFT}) as $\mv{A}_{\rm MR}=\mv{U}^*\mv{B}_{\rm
MR}\mv{U}^H$ and $\mv{A}_{\rm ZF}=\mv{U}^*\mv{B}_{\rm ZF}\mv{U}^H$,
respectively, where
\begin{eqnarray}
\mv{B}_{\rm MR} &=& \mv{\Sigma}\mv{V}^T\left[\begin{array}{cc}
0 & a_{\rm MR} \\
b_{\rm MR} & 0 \end{array}\right]\mv{V}\mv{\Sigma} \label{eq:B
MF} \\
\mv{B}_{\rm ZF} &=& \mv{\Sigma}^{-1}\mv{V}^T\left[\begin{array}{cc}
0 & a_{\rm ZF} \\
b_{\rm ZF} & 0 \end{array}\right]\mv{V}\mv{\Sigma}^{-1}. \label{eq:B
ZF}
\end{eqnarray}

Using Lemmas \ref{lemma:zero correlation} and \ref{lemma:unit
correlation}, we can show the following results on the optimality of
MRR-MRT and ZFR-ZFT in some special channel cases. For brevity, here
we omit the proofs.
\begin{lemma}\label{lemma:optimality MR}
In both cases of $\mv{h}_1\bot\mv{h}_2$ and
$\mv{h}_1\parallel\mv{h}_2$, $\mv{A}_{\rm MR}$ in (\ref{eq:MRR MRT})
is equivalent to the optimal $\mv{A}$ given in (\ref{eq:optimal A}).
\end{lemma}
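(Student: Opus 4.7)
The plan is to substitute the MRR--MRT matrix into the already-derived factorization $\mv{A}_{\rm MR}=\mv{U}^*\mv{B}_{\rm MR}\mv{U}^H$ from~(\ref{eq:B MF}) and verify that, in each degenerate channel geometry, $\mv{B}_{\rm MR}$ collapses exactly to the simplified $2\times 2$ template prescribed by Lemma~\ref{lemma:zero correlation} or Lemma~\ref{lemma:unit correlation}, respectively. Because those lemmas already assert that the corresponding templates \emph{are} optimal, exhibiting $\mv{B}_{\rm MR}$ as a member of the appropriate template immediately establishes the claimed equivalence, with the free MRR--MRT scalars $a_{\rm MR}$ and $b_{\rm MR}$ realizing the free parameters of the template.

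For the orthogonal case $\mv{h}_1\bot\mv{h}_2$, I would first observe that $\mv{H}_{\rm UL}^H\mv{H}_{\rm UL}=\mathtt{diag}(\|\mv{h}_1\|^2,\|\mv{h}_2\|^2)$, so one valid SVD takes $\mv{V}$ to be either the identity $\mv{I}$ or the swap $\mv{F}$ depending on whether $\|\mv{h}_1\|\ge\|\mv{h}_2\|$ or not. Plugging either choice into~(\ref{eq:B MF}), the conjugation $\mv{V}^T\bigl[\begin{smallmatrix}0 & a_{\rm MR}\\ b_{\rm MR} & 0\end{smallmatrix}\bigr]\mv{V}$ preserves the anti-diagonal shape (possibly swapping the two off-diagonal entries), and sandwiching by the diagonal $\mv{\Sigma}$ merely rescales them by $\sigma_1\sigma_2\ge 0$. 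The resulting $\mv{B}_{\rm MR}$ has the form $\bigl[\begin{smallmatrix}0 & c\\ d & 0\end{smallmatrix}\bigr]$ with $c,d\ge 0$, matching Lemma~\ref{lemma:zero correlation} exactly.

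For the parallel case $\mv{h}_1=\upsilon\mv{h}_2$, the uplink matrix is rank $1$, so $\sigma_2=0$ and $\mv{\Sigma}=\mathtt{diag}(\sigma_1,0)$. In the product $\mv{\Sigma}\mv{V}^T\bigl[\begin{smallmatrix}0 & a_{\rm MR}\\ b_{\rm MR} & 0\end{smallmatrix}\bigr]\mv{V}\mv{\Sigma}$, the zero row and column of $\mv{\Sigma}$ annihilate every entry of the inner product save its $(1,1)$ element, leaving $\mv{B}_{\rm MR}$ in the single-parameter form of Lemma~\ref{lemma:unit correlation}. The one remaining check is that the surviving $(1,1)$ entry can be taken real and nonnegative; this follows by exercising the unitary phase freedom in the first column of $\mv{V}$, with the matching phase absorbed into the first column of $\mv{U}$ so that $\mv{H}_{\rm UL}=\mv{U}\mv{\Sigma}\mv{V}^H$ is preserved.

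The main technical subtlety is precisely this one of reconciling the non-uniqueness of the SVD---phase ambiguities, permutation of singular values, and the totally free second columns when $\sigma_2=0$---with the specific canonical templates of Lemmas~\ref{lemma:zero correlation} and~\ref{lemma:unit correlation}. Once a convenient SVD representative is committed to in each case, the remaining matrix algebra is mechanical; the hardest step is thus not a calculation but the careful bookkeeping that shows such a choice places $\mv{B}_{\rm MR}$ inside the canonical family without any loss of generality.
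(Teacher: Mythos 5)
Your argument is correct and follows exactly the route the paper indicates for this result (which it states follows from Lemmas~\ref{lemma:zero correlation} and~\ref{lemma:unit correlation} but omits for brevity): substituting the factorization $\mv{A}_{\rm MR}=\mv{U}^*\mv{B}_{\rm MR}\mv{U}^H$ of (\ref{eq:B MF}) and checking that $\mv{B}_{\rm MR}$ sweeps out the canonical templates of those lemmas. Your explicit handling of the SVD non-uniqueness (choice of $\mv{V}$ in the orthogonal case, and the phase rotation of $\mv{u}_1,\mv{v}_1$ needed when $\upsilon$ is complex in the parallel case) is a welcome addition that the paper's own sketch glosses over.
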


\begin{lemma}\label{lemma:optimality ZF}
In the case of $\mv{h}_1\bot\mv{h}_2$, $\mv{A}_{\rm ZF}$ in
(\ref{eq:ZFR ZFT}) is equivalent to the optimal $\mv{A}$ given in
(\ref{eq:optimal A}).
\end{lemma}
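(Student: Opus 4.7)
The plan is to directly verify that the $2\times 2$ inner matrix $\mv{B}_{\rm ZF}$ in (\ref{eq:B ZF}) specializes to the anti-diagonal form prescribed by Lemma \ref{lemma:zero correlation} whenever $\mv{h}_1\bot\mv{h}_2$. Since both $\mv{A}_{\rm ZF}$ and the optimal $\mv{A}$ share the common outer shell $\mv{U}^*(\cdot)\mv{U}^H$, it suffices to show that the two corresponding inner matrices coincide in structure, and that the two-parameter family obtained by varying $(a_{\rm ZF},b_{\rm ZF})\geq 0$ realizes all admissible nonnegative pairs $(c,d)$ in Lemma \ref{lemma:zero correlation}.

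First I would make the SVD of $\mv{H}_{\rm UL}=[\mv{h}_1,\mv{h}_2]$ explicit under the orthogonality hypothesis. Since $\mv{h}_1^H\mv{h}_2=0$, the Gramian $\mv{H}_{\rm UL}^H\mv{H}_{\rm UL}$ equals $\mathtt{diag}(\|\mv{h}_1\|^2,\|\mv{h}_2\|^2)$, so, modulo harmless unit-modulus column phases, the right-singular-vector matrix $\mv{V}$ can be taken as $\mv{I}$ when $\|\mv{h}_1\|\geq\|\mv{h}_2\|$ and as the swap permutation $\mv{F}$ otherwise, while $\mv{\Sigma}=\mathtt{diag}(\sigma_1,\sigma_2)$ collects the two column norms in non-increasing order.

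Substituting these choices into (\ref{eq:B ZF}) reduces the computation to conjugating the anti-diagonal matrix whose off-diagonal entries are $a_{\rm ZF}$ and $b_{\rm ZF}$ by $\mv{V}$, then rescaling by $\mv{\Sigma}^{-1}$ on both sides. Conjugation by $\mv{V}\in\{\mv{I},\mv{F}\}$ preserves anti-diagonality (it merely interchanges $a_{\rm ZF}$ and $b_{\rm ZF}$), and the diagonal rescaling multiplies the two off-diagonal entries by the common positive factor $1/(\sigma_1\sigma_2)$. Hence $\mv{B}_{\rm ZF}$ inherits the exact anti-diagonal form $\mv{B}=\mv{U}^*\bigl[\,0\ c;\ d\ 0\,\bigr]\mv{U}^H$ of Lemma \ref{lemma:zero correlation}, and the two nonnegative entries $c,d$ are independently tunable via $a_{\rm ZF}$ and $b_{\rm ZF}$; in particular the map $(a_{\rm ZF},b_{\rm ZF})\mapsto(c,d)$ is a bijection from $\mathbb{R}_+^2$ onto $\mathbb{R}_+^2$, which is the sense in which the two parameterizations are \emph{equivalent}.

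The only point that needs care is the mild ambiguity of the SVD. When $\|\mv{h}_1\|\neq\|\mv{h}_2\|$, only unit-modulus phases on the columns of $\mv{V}$ remain free and these are absorbed trivially; when $\|\mv{h}_1\|=\|\mv{h}_2\|$ the singular values coincide and $\mv{V}$ is determined only up to a $2\times 2$ unitary rotation, but any such rotation is compensated by a matching unitary factor in $\mv{U}$, leaving the outer shell $\mv{U}^*(\cdot)\mv{U}^H$, and hence the equivalence with Lemma \ref{lemma:zero correlation}, unaffected. Beyond this SVD bookkeeping the argument reduces to a single matrix substitution, so no substantive obstacle is anticipated.
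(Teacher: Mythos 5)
Your proposal is correct and follows exactly the route the paper intends: the authors omit this proof, saying only that it follows from Lemma \ref{lemma:zero correlation}, and your verification that under $\mv{h}_1\bot\mv{h}_2$ the inner matrix $\mv{B}_{\rm ZF}$ in (\ref{eq:B ZF}) collapses to the anti-diagonal form of that lemma, with $(a_{\rm ZF},b_{\rm ZF})$ sweeping all admissible pairs $(c,d)$ up to the common positive factor $1/(\sigma_1\sigma_2)$, is precisely the missing argument. Your treatment of the SVD ambiguities (column phases and the degenerate $\sigma_1=\sigma_2$ case) is also sound, since only the magnitudes of the anti-diagonal entries enter the rate and power expressions.
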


It is also noted that in the case of $\mv{h}_1\parallel \mv{h}_2$,
$\mv{B}_{\rm ZF}$ in (\ref{eq:B ZF}) does not exist since in
$\mv{\Sigma}$, $\sigma_2=0$, and thus $\mv{\Sigma}$ is
non-invertible. As a result, $\mv{A}_{\rm ZF}$ does not exist either
in this case.

\section{Performance Analysis} \label{sec:performances}

To further investigate the performances of the proposed optimal and
suboptimal relay beamforming schemes, we study in this section their
achievable sum-rates in TWRC. First, we derive an upper bound on the
maximum sum-rate or the sum-capacity achievable by the optimal
beamforming scheme, as well as various lower bounds on the
achievable sum-rates by the suboptimal schemes. Then, by comparing
these rate bounds at asymptotically high SNR, we characterize the
limiting sum-rate losses resulted by the suboptimal beamforming
schemes as compared to the sum-capacity.

\subsection{Rate Bounds}

First, we study the sum-capacity of TWRC with given $P_R$, $p_1$,
and $p_2$. The sum-capacity of TWRC can be obtained by solving the
WSRMax problem (\ref{eq:WSRMax}) with $w_{12}=w_{21}=1$. Since
WSRMax for TWRC is non-convex and is thus difficult to solve, we
consider an upper bound on the sum-capacity, which can be obtained
by solving the following modified problem of (\ref{eq:WSRMax}):
\begin{align}\label{eq:SRMax UB}
\mathop{\mathtt{Max.}}_{\mv{B}_{21}, \mv{B}_{12}} &~~
\frac{1}{2}\log_{2} \left (1 +
\frac{|\mv{g}_1^T\mv{B}_{21}\mv{g}_2|^2 p_2 }{
\|\mv{B}_{21}^H\mv{g}_1^*\|^2 + 1}\right) \nonumber
\\ &~~ +\frac{1}{2}\log_{2} \left (1 +
\frac{|\mv{g}_2^T\mv{B}_{12}\mv{g}_1|^2 p_1 }{
\|\mv{B}_{12}^H\mv{g}_2^*\|^2 +
1}\right ) \nonumber \\
\mathtt{s.t.} &~~ \|\mv{B}_{12}\mv{g}_1 \|^{2} p_{1} + \|
\mv{B}_{21}\mv{g}_2 \|^{2} p_2 + \kappa_{12}\mathtt{tr}
(\mv{B}_{12}\mv{B}_{12}^{H}) \nonumber \\
&~~ +\kappa_{21}\mathtt{tr} (\mv{B}_{21}\mv{B}_{21}^{H}) \leq P_R
\end{align}
where $\kappa_{12}$ and $\kappa_{21}$ are nonnegative and satisfy
$\kappa_{12}+\kappa_{21}=1$. Let $C_{\rm
sum}(\kappa_{12},\kappa_{21})$ denote the maximum value of the above
problem. Note that if we add the constraint
$\mv{B}_{12}=\mv{B}_{21}=\mv{B}$ into the above problem, solution of
$\mv{B}$ will lead to the exact sum-capacity of TWRC. Since $C_{\rm
sum}(\kappa_{12},\kappa_{21})$ is an upper bound on the sum-capacity
for any feasible $\kappa_{12}$ and $\kappa_{21}$, it can be
tightened by minimizing $C_{\rm sum}(\kappa_{12},\kappa_{21})$ over
all the feasible pairs of $\kappa_{12}$ and $\kappa_{21}$. For given
$\kappa_{12}$ and $\kappa_{21}$, problem (\ref{eq:SRMax UB}) can be
decomposed into the following two independent subproblems:
\begin{align}\label{eq:SRMax UB subproblem 1}
\mathop{\mathtt{Max.}}_{\mv{B}_{21}} &~~ \frac{1}{2}\log_{2} \left
(1 + \frac{|\mv{g}_1^T\mv{B}_{21}\mv{g}_2|^2 p_2 }{
\|\mv{B}_{21}^H\mv{g}_1^*\|^2 + 1}\right) \nonumber \\
\mathtt{s.t.} &~~  \| \mv{B}_{21}\mv{g}_2 \|^{2} p_2 +
\kappa_{21}\mathtt{tr} (\mv{B}_{21}\mv{B}_{21}^{H}) \leq P_{21}
\end{align}
\begin{align}\label{eq:SRMax UB subproblem 2}
\mathop{\mathtt{Max.}}_{\mv{B}_{12}} &~~ \frac{1}{2}\log_{2} \left
(1 + \frac{|\mv{g}_2^T\mv{B}_{12}\mv{g}_1|^2 p_1 }{
\|\mv{B}_{12}^H\mv{g}_2^*\|^2 +
1}\right ) \nonumber \\
\mathtt{s.t.} &~~ \|\mv{B}_{12}\mv{g}_1 \|^{2} p_{1} +
\kappa_{12}\mathtt{tr} (\mv{B}_{12}\mv{B}_{12}^{H})  \leq P_{12}
\end{align}
subject to an additional common constraint $P_{21}+P_{12}\leq P_R$.
Let $C_{21}(\kappa_{21},P_{21})$ and $C_{12}(\kappa_{12},P_{12})$
denote the maximum values of the above two subproblems,
respectively. Thus, $C_{\rm sum}(\kappa_{12},\kappa_{21})$ can be
obtained by first solving the above two subproblems for given
$P_{12}$ and $P_{21}$, and then maximizing
$C_{21}(\kappa_{21},P_{21})+C_{12}(\kappa_{12},P_{12})$ over all the
feasible values of $P_{12}$ and $P_{21}$. Note that each of the
above two subproblems optimizes the relay beamforming matrix to
maximize the capacity of the corresponding OWRC from S2 and S1 via
R, or from S1 to S2 via R. By applying the results in prior work
\cite{Hua07}, \cite{Vidal07}, the optimal solutions to
(\ref{eq:SRMax UB subproblem 1}) and (\ref{eq:SRMax UB subproblem
2}) can be obtained as
\begin{eqnarray}
\mv{B}_{21}=\sqrt{\frac{P_{21}}{p_2\|\mv{g}_2\|^2+\kappa_{21}}}\tilde{\mv{g}}_1^*\tilde{\mv{g}}_2^H
\label{eq:optimal B OWRC 1}
\\
\mv{B}_{12}=\sqrt{\frac{P_{12}}{p_1\|\mv{g}_1\|^2+\kappa_{12}}}\tilde{\mv{g}}_2^*\tilde{\mv{g}}_1^H
\label{eq:optimal B OWRC 2}
\end{eqnarray}
where $\tilde{\mv{g}}_i=\frac{\mv{g}_i}{\|\mv{g}_i\|}, i=1,2$. By
substituting the above expressions into the objective functions of
(\ref{eq:SRMax UB subproblem 1}) and (\ref{eq:SRMax UB subproblem
2}), respectively, we obtain
\begin{eqnarray}
C_{21}(\kappa_{21},P_{21})=\frac{1}{2}\log_{2} \left (1 +
\frac{\theta_2p_2}{1+\frac{(\theta_2/\theta_1)p_2}{P_{21}}+\frac{\kappa_{21}}{\theta_1P_{21}}}\right)
\label{eq:C21}
\\
C_{12}(\kappa_{12},P_{12})=\frac{1}{2}\log_{2} \left (1 +
\frac{\theta_1p_1}{1+\frac{(\theta_1/\theta_2)p_1}{P_{12}}+\frac{\kappa_{12}}{\theta_2P_{12}}}\right)
\label{eq:C12}
\end{eqnarray}
where, for conciseness, we have denoted
$\|\mv{g}_1\|^2=\|\mv{h}_1\|^2=\theta_1$ and
$\|\mv{g}_2\|^2=\|\mv{h}_2\|^2=\theta_2$. It then follows that the
tightest upper bound on the sum-capacity, denoted as $C_{\rm UB}$,
can be obtained as
\begin{equation}\label{eq:sum rate UB}
C_{\rm UB}=\min_{\kappa_{21}+\kappa_{12}=1} \max_{P_{21}+P_{12}\leq
P_R}C_{21}(\kappa_{21},P_{21})+C_{12}(\kappa_{12},P_{12}).
\end{equation}

Unfortunately, there is in general no closed-form solution of
$C_{\rm UB}$, and thus numerical search over all the feasible values
of $\kappa_{21},\kappa_{12},P_{21}$, and $P_{12}$ is needed to
obtain $C_{\rm UB}$. Since $C_{21}(\kappa_{21},P_{21})$ and
$C_{12}(\kappa_{12},P_{12})$ are increasing functions of $P_{21}$
and $P_{12}$, respectively, a simple upper bound on the sum-capacity
(less tighter than $C_{\rm UB}$) can be obtained from (\ref{eq:C21})
and (\ref{eq:C12}) with $\kappa_{21}=\kappa_{12}=1/2$ and
$P_{21}=P_{12}=P_R$ as follows:
\begin{align}\label{eq:sum rate UB 2}
C_{\rm UB}^{(0)}=&~\frac{1}{2}\log_{2} \left (1 +
\frac{\theta_2p_2}{1+\frac{(\theta_2/\theta_1)p_2}{P_R}+\frac{1}{2\theta_1P_R}}\right)
\nonumber \\ &~ +\frac{1}{2}\log_{2} \left (1 +
\frac{\theta_1p_1}{1+\frac{(\theta_1/\theta_2)p_1}{P_R}+\frac{1}{2\theta_2P_R}}\right).
\end{align}

\begin{remark}\label{remark:sum rate upper bound}
Note that $C_{\rm UB}^{(0)}$ given in (\ref{eq:sum rate UB 2}) can
be used as $\bar{R}_{\rm sum}$ for Algorithm \ref{algorithm:SR max}
in Section \ref{sec:capacity region}. Since $C_{\rm UB}^{(0)}$  is
obtained without any constraint on rate allocations among $r_{21}$
and $r_{12}$, it is a valid upper bound on the achievable sum-rate
regardless of the rate-profile vector $\mv{\alpha}$.
\end{remark}

Next, we derive the lower bounds on the sum-rates achievable by the
proposed suboptimal relay beamforming schemes, MRR-MRT and ZFR-ZFT,
denoted as $R_{\rm LB}^{\rm MR}$ and $R_{\rm LB}^{\rm ZF}$,
respectively. Since the rate lower bound is of interest, we assume
here $a_{\rm x}=b_{\rm x}$, where ${\rm x=MR}$ in (\ref{eq:MRR MRT})
or ${\rm ZF}$ in (\ref{eq:ZFR ZFT}). For conciseness, define
$\rho=\frac{\left|\mv{h}_1^H\mv{h}_2\right|^2}{\theta_1\theta_2}$ as
the correlation coefficient between $\mv{h}_1$ and $\mv{h}_2$. Then,
the following lemmas are obtained.
\begin{lemma}\label{lemma:rate LB MR}
With the MRR-MRT relay beamforming scheme, the achievable sum-rate
of TWRC is lower-bounded by $R_{\rm LB}^{\rm MR}$ given in
(\ref{eq:MR sum rate}) (see next page).
\end{lemma}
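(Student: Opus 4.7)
The plan is to specialize the MRR-MRT beamformer $\mv{A}_{\rm MR}$ in (\ref{eq:MRR MRT}) to the symmetric choice $a_{\rm MR}=b_{\rm MR}=a\geq 0$, which collapses it to the rank-two symmetric matrix $\mv{A}_{\rm MR}=a(\mv{h}_2^*\mv{h}_1^H+\mv{h}_1^*\mv{h}_2^H)$. Since this is a strict one-parameter subfamily of the two-parameter MRR-MRT design, whatever sum-rate it delivers is automatically a valid lower bound on what MRR-MRT can achieve, which is exactly what the label $R_{\rm LB}^{\rm MR}$ denotes.

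Once $\mv{A}_{\rm MR}$ has that explicit rank-two form, every bilinear and quadratic form appearing in the rate bounds (\ref{eq:rate 21}), (\ref{eq:rate 12}) and in the relay-power expression (\ref{eq:relay power}) reduces to an algebraic expression in the scalars $\theta_1,\theta_2,\rho,p_1,p_2,a$. The key identities needed are $\mv{h}_j^T\mv{h}_i^*=\mv{h}_i^H\mv{h}_j$ and $|\mv{h}_1^H\mv{h}_2|^2=\theta_1\theta_2\rho$. A handful of direct expansions then produce the useful cross-gains $|\mv{h}_1^T\mv{A}_{\rm MR}\mv{h}_2|^2=|\mv{h}_2^T\mv{A}_{\rm MR}\mv{h}_1|^2=a^2\theta_1^2\theta_2^2(1+\rho)^2$, the noise-enhancement norms $\|\mv{A}_{\rm MR}^H\mv{h}_i^*\|^2=a^2\theta_i^2\theta_{3-i}(1+3\rho)$, the matching retransmit norms $\|\mv{A}_{\rm MR}\mv{h}_i\|^2=\|\mv{A}_{\rm MR}^H\mv{h}_i^*\|^2$ (which is immediate from $\mv{A}_{\rm MR}^T=\mv{A}_{\rm MR}$, so that $\mv{A}_{\rm MR}^H\mv{h}_i^*=(\mv{A}_{\rm MR}\mv{h}_i)^*$), and the trace $\mathtt{tr}(\mv{A}_{\rm MR}\mv{A}_{\rm MR}^H)=2a^2\theta_1\theta_2(1+\rho)$.

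Assembling these pieces, the relay power becomes $p_R(\mv{A}_{\rm MR})=a^2\theta_1\theta_2\bigl[(1+3\rho)(\theta_1 p_1+\theta_2 p_2)+2(1+\rho)\bigr]$. A short derivative check shows each per-link SNR in (\ref{eq:SNR}) is strictly increasing in $a^2$, so the best symmetric choice saturates the relay budget, fixing $a^2$ as the unique positive root of $p_R(\mv{A}_{\rm MR})=P_R$. Substituting that value of $a^2$ back into (\ref{eq:rate 21}) and (\ref{eq:rate 12}) and adding the two rate bounds yields the closed-form expression $R_{\rm LB}^{\rm MR}$ stated in (\ref{eq:MR sum rate}).

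The argument is essentially mechanical: no inequality is invoked beyond the monotonicity in $a^2$, and no optimization over beamformers is performed beyond choosing the single scalar $a$. The main obstacle is bookkeeping, specifically tracking how the two rank-one pieces of $\mv{A}_{\rm MR}$ combine so that ``useful'' quadratic forms such as $\mv{h}_j^T\mv{A}_{\rm MR}\mv{h}_i$ collect a clean $(1+\rho)$ factor, whereas ``self-energy'' quantities like $\|\mv{A}_{\rm MR}\mv{h}_i\|^2$ accumulate a $(1+3\rho)$ factor from three $|\mv{h}_1^H\mv{h}_2|^2$ cross-terms plus one diagonal contribution.
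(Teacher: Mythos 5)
Your proposal is correct and follows essentially the same route as the paper's Appendix F: set $a_{\rm MR}=b_{\rm MR}$, evaluate the cross-gains $|\mv{h}_i^T\mv{A}_{\rm MR}\mv{h}_j|$, the norms $\|\mv{A}_{\rm MR}\mv{h}_i\|^2=\|\mv{A}_{\rm MR}^H\mv{h}_i^*\|^2$, and $\mathtt{tr}(\mv{A}_{\rm MR}\mv{A}_{\rm MR}^H)$ in terms of $\theta_1,\theta_2,\rho$, then saturate the relay power constraint to fix the scalar and substitute into (\ref{eq:rate 21}) and (\ref{eq:rate 12}). All of your intermediate identities, including the $(1+\rho)$ and $(1+3\rho)$ factors and the resulting value of the scaling constant, match the paper's.
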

\begin{proof}
Please refer to Appendix \ref{appendix:proof MR sum rate}.
\end{proof}

\begin{lemma}\label{lemma:rate LB ZF}
With the ZFR-ZFT relay beamforming scheme, the achievable sum-rate
of TWRC is lower-bounded by $R_{\rm LB}^{\rm ZF}$ given in
(\ref{eq:ZF sum rate}) (see next page).
\end{lemma}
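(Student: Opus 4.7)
The plan is to substitute the ZFR–ZFT beamforming matrix $\mv{A}_{\rm ZF}$ of (\ref{eq:ZFR ZFT}) into the signal model (\ref{eq:channel matrix form}), specialize to the suboptimal choice $a_{\rm ZF}=b_{\rm ZF}:=a$, then compute the resulting per-link SNRs and the relay power in closed form so that $a$ can be eliminated via the constraint $p_R(\mv{A}_{\rm ZF})=P_R$. Since the zero-forcing property $\mv{H}_{\rm UL}^{\dag}\mv{H}_{\rm UL}=\mv{I}_2$ and $\mv{H}_{\rm DL}\mv{H}_{\rm DL}^{\dag}=\mv{I}_2$ already kills both uplink and downlink cross-talk (as shown in the decoupled display following (\ref{eq:ZFR ZFT})), the observation at S$i$ is simply $a\sqrt{p_j}s_j(n)+\tilde z_i(n)$, so no self-interference subtraction step enters the rate calculation and the SINR is just ``signal over coloured noise.''

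The key reduction is to express the noise variance and the relay power entirely in terms of $\theta_1=\|\mv{h}_1\|^2$, $\theta_2=\|\mv{h}_2\|^2$, and $\rho=|\mv{h}_1^H\mv{h}_2|^2/(\theta_1\theta_2)$. Concretely, I would invert the $2\times 2$ Gram matrix $\mv{H}_{\rm UL}^H\mv{H}_{\rm UL}$ to read off the relay-noise variance at S$i$ as $a^2[(\mv{H}_{\rm UL}^H\mv{H}_{\rm UL})^{-1}]_{ii}+1$, whose diagonal entries equal $1/(\theta_i(1-\rho))$. For the relay power, I would use $\mv{H}_{\rm UL}^{\dag}\mv{h}_i=\mv{e}_i$ to obtain $\|\mv{A}_{\rm ZF}\mv{h}_i\|^2=a^2[(\mv{H}_{\rm DL}\mv{H}_{\rm DL}^H)^{-1}]_{ii}$, and the reciprocity relation $\mv{H}_{\rm DL}=\mv{F}\mv{H}_{\rm UL}^T$ (with $\mv F$ the swap matrix) turns this into $a^2/(\theta_{3-i}(1-\rho))$. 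The trace term is then $a^2\,\mathtt{tr}\bigl((\mv{H}_{\rm DL}\mv{H}_{\rm DL}^H)^{-1}(\mv{H}_{\rm UL}^H\mv{H}_{\rm UL})^{-1}\bigr)$, which again reduces via the same reciprocity identity to a real-valued expression in $\theta_1,\theta_2,\rho$.

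Once all three pieces of $p_R(\mv{A}_{\rm ZF})$ are available in closed form, setting $p_R(\mv{A}_{\rm ZF})=P_R$ yields an explicit formula for the maximal feasible $a^2$ as a function of $p_1,p_2,P_R,\theta_1,\theta_2,\rho$. Plugging this $a^2$ into the two SINR expressions $\gamma_i=a^2 p_j/\bigl(a^2/(\theta_i(1-\rho))+1\bigr)$ and summing $\tfrac{1}{2}\log_2(1+\gamma_1)+\tfrac{1}{2}\log_2(1+\gamma_2)$ produces the claimed $R_{\rm LB}^{\rm ZF}$ in (\ref{eq:ZF sum rate}). The ``lower bound'' qualifier is then automatic: the ZFR–ZFT family actually permits $a_{\rm ZF}\neq b_{\rm ZF}$, so fixing $a_{\rm ZF}=b_{\rm ZF}$ gives a \emph{feasible} rate and hence bounds the best ZFR–ZFT sum-rate from below.

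The main obstacle I anticipate is the pseudoinverse bookkeeping in the trace term $\mathtt{tr}\bigl((\mv{H}_{\rm DL}\mv{H}_{\rm DL}^H)^{-1}(\mv{H}_{\rm UL}^H\mv{H}_{\rm UL})^{-1}\bigr)$: the off-diagonal entries of the two inverses are complex conjugates of one another, and one must verify that all phase factors collapse so that the product's trace is a real function of $\theta_1,\theta_2,\rho$ alone. Everything else is routine once the three diagonal-entry identities $[(\mv{H}_{\rm UL}^H\mv{H}_{\rm UL})^{-1}]_{ii}=1/(\theta_i(1-\rho))$ and their $\mv{H}_{\rm DL}$ counterparts are in hand; the degenerate case $\rho=1$ is ruled out because $\mv{A}_{\rm ZF}$ itself ceases to exist there, consistent with the remark following Lemma \ref{lemma:optimality ZF}.
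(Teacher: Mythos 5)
Your setup is the same as the paper's: fix $a_{\rm ZF}=b_{\rm ZF}=\nu$, note that zero-forcing decouples the two links, and evaluate the per-link SNRs and the relay power through the Gram-matrix inverses. Your identities $\|\mv{a}_i\|^2=[(\mv{H}_{\rm UL}^H\mv{H}_{\rm UL})^{-1}]_{ii}=1/(\theta_i(1-\rho))$ and $\|\mv{A}_{\rm ZF}\mv{h}_i\|^2=\nu^2/(\theta_{3-i}(1-\rho))$ are correct (modulo one index slip: the relay-noise amplification seen at S1 is $\nu^2\|\mv{a}_2\|^2=\nu^2/(\theta_2(1-\rho))$, since the S2$\to$S1 link passes through the \emph{second} row of $\mv{H}_{\rm UL}^{\dag}$, so your $\gamma_i$ should carry $\theta_{3-i}$, not $\theta_i$, in its denominator).

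The genuine gap is your final step: the exact sum-rate $\tfrac{1}{2}\log_2(1+\gamma_1)+\tfrac{1}{2}\log_2(1+\gamma_2)$ does \emph{not} ``produce the claimed $R_{\rm LB}^{\rm ZF}$'' --- it is a sum of two half-logs depending on $\theta_1$ and $\theta_2$ separately, whereas (\ref{eq:ZF sum rate}) is a single $\log_2$ containing $\max(p_1,p_2)$ and only the lumped quantity $\frac{\theta_1+\theta_2}{\theta_1\theta_2(1-\rho)}$; the paper itself stresses in Section VI that $R_{\rm LB}^{\rm ZF}<R^{\rm ZF}$ in general even with $a_{\rm ZF}=b_{\rm ZF}$. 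So the ``lower bound'' qualifier is not just the restriction to equal gains, and reaching (\ref{eq:ZF sum rate}) requires three further relaxations that are the actual content of the paper's proof: (i) Jensen's inequality on the convex function $\log_2(1+1/x)$ to merge the two half-logs into one $\log_2$; (ii) the bound $\|\mv{a}_2\|^2p_1+\|\mv{a}_1\|^2p_2\le(\|\mv{a}_1\|^2+\|\mv{a}_2\|^2)\max(p_1,p_2)$, which is where the $\max$ comes from; and (iii) a lower bound on $\nu^2$ obtained by replacing the exact trace term $\mathtt{tr}\bigl(\mv{H}_{\rm UL}^{\dag}(\mv{H}_{\rm UL}^{\dag})^H(\mv{H}_{\rm DL}^{\dag})^H\mv{H}_{\rm DL}^{\dag}\bigr)$ (which, as you anticipated, does evaluate to a real function of $\theta_1,\theta_2,\rho$, namely $2(1+\rho)/(\theta_1\theta_2(1-\rho)^2)$) by the larger quantity $(\|\mv{a}_1\|^2+\|\mv{a}_2\|^2)^2$ via $\mathtt{tr}(\mv{X}\mv{Y})\le\mathtt{tr}(\mv{X})\mathtt{tr}(\mv{Y})$ for $\mv{X},\mv{Y}\succeq 0$, together with $\|\mv{a}_1\|^2+\|\mv{a}_2\|^2=\frac{\theta_1+\theta_2}{\theta_1\theta_2(1-\rho)}$. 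As written, your derivation would establish a valid but differently shaped (and tighter) bound; it does not prove the stated formula without these additional steps.
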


\begin{proof}
Please refer to Appendix \ref{appendix:proof ZF sum rate}.
\end{proof}

\begin{figure*}
\begin{eqnarray}\label{eq:MR sum rate}
R_{\rm LB}^{\rm MR} = \frac{1}{2}\log_{2} \bigg ( 1 +
\frac{\theta_2p_2}{\left(1 + \frac{p_1 +
(\theta_2/\theta_1)p_2}{P_R}\right )\frac{1 + 3 \rho}{(1+\rho)^{2}}
+ \frac{2}{\theta_1(1+\rho) P_R }} \bigg) + \frac{1}{2}\log_{2}
\bigg ( 1 + \frac{\theta_1p_1}{\left (1 +
\frac{(\theta_1/\theta_2)p_1+ p_2}{P_R} \right ) \frac{1 + 3
\rho}{(1+\rho)^{2}} + \frac{2}{ \theta_2(1+\rho) P_R}} \bigg ).
\end{eqnarray}
\begin{eqnarray}\label{eq:ZF sum rate}
R_{\rm LB}^{\rm ZF} &=& \log_{2} \bigg ( 1 +
\frac{2p_1p_2}{\frac{\theta_1+\theta_2}{\theta_1\theta_2(1-\rho)}\left(1+\frac{p_1}{P_R}+\frac{p_2}{P_R}\right)\left(\max(p_1,p_2)
+\frac{\theta_1+\theta_2}{\theta_1\theta_2(1-\rho)}\frac{p_1+p_2}{P_R+p_1+p_2}\right)}\bigg).
\end{eqnarray}
\vspace{-0.2in}
\end{figure*}

\subsection{Asymptotic Results}

Since the main advantage of TWRC over OWRC is to recover the loss of
spectral efficiency due to half-duplex transmissions (see, e.g.,
\cite{Rankov05}-\cite{Tarokh07}), it is important to examine the
achievable sum-rate in TWRC at asymptotically high SNR. In the
following theorem, asymptotic results on various upper and lower
rate bounds in (\ref{eq:sum rate UB 2}), (\ref{eq:MR sum rate}), and
(\ref{eq:ZF sum rate}) are presented.
\begin{theorem}\label{theorem:sum rate high SNR}
Let $p_1$, $p_2$, and $P_R$ all go to infinity with fixed
$\frac{P_R}{p_1}=K_1$ and $\frac{P_R}{p_2}=K_2$. Then, $C_{\rm
UB}^{(0)}$, $R_{\rm LB}^{\rm MR}$, and $R_{\rm LB}^{\rm ZF}$
converge to the values given in (\ref{eq:asymp C UB}),
(\ref{eq:asymp R MR LB}), and (\ref{eq:asymp R ZF LB}), respectively
(see next page).
\end{theorem}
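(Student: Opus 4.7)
The plan is to take each of the three rate expressions in (\ref{eq:sum rate UB 2}), (\ref{eq:MR sum rate}), and (\ref{eq:ZF sum rate}) separately and pass to the limit as $P_R\to\infty$ under the scaling $p_1 = P_R/K_1$, $p_2 = P_R/K_2$. Each SNR sitting inside the logarithms is a rational function of $P_R$. Under the prescribed scaling, the ratios $p_1/P_R$ and $p_2/P_R$ stay pinned at the constants $1/K_1$ and $1/K_2$, while any term of the form $1/P_R$, $1/(\theta_jP_R)$, or $1/(P_R+p_1+p_2)$ vanishes. The convergence claim then follows by continuity of the logarithm on arguments that remain bounded away from zero.

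For $C_{\rm UB}^{(0)}$, in the first summand of (\ref{eq:sum rate UB 2}) the denominator of the SNR converges to the finite constant $1+\theta_2/(\theta_1 K_2)$ (since $(\theta_2/\theta_1)p_2/P_R\to \theta_2/(\theta_1 K_2)$ and $1/(2\theta_1 P_R)\to 0$), while the numerator $\theta_2 p_2$ diverges. The same argument applied to the second summand with indices swapped yields (\ref{eq:asymp C UB}). For $R_{\rm LB}^{\rm MR}$ the computation is essentially identical: in each summand of (\ref{eq:MR sum rate}) the bracket $1+(p_1+(\theta_2/\theta_1)p_2)/P_R$ collapses to $1+1/K_1+\theta_2/(\theta_1 K_2)$, the additive term $2/(\theta_1(1+\rho)P_R)$ vanishes, and the $\rho$-dependent factor $(1+3\rho)/(1+\rho)^2$ is power-independent; collecting these constants gives (\ref{eq:asymp R MR LB}).

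The ZF case in (\ref{eq:ZF sum rate}) is the main obstacle because its denominator contains two competing quantities inside the inner bracket, namely the diverging $\max(p_1,p_2)$ together with the additive summand $\frac{\theta_1+\theta_2}{\theta_1\theta_2(1-\rho)}\cdot\frac{p_1+p_2}{P_R+p_1+p_2}$ that stays bounded. Under the scaling, $\max(p_1,p_2)=P_R/\min(K_1,K_2)$ grows linearly in $P_R$, so the additive $O(1)$ summand may be dropped against it up to vanishing relative corrections. The outer factor $1+p_1/P_R+p_2/P_R$ tends to the finite constant $1+1/K_1+1/K_2$. Combining these observations, the SNR inside the $\log_2$ is asymptotically $\frac{2p_1p_2}{\frac{\theta_1+\theta_2}{\theta_1\theta_2(1-\rho)}(1+1/K_1+1/K_2)\max(p_1,p_2)}$, and using $p_1p_2/\max(p_1,p_2)=\min(p_1,p_2)$ produces (\ref{eq:asymp R ZF LB}).

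The only mildly delicate point is verifying rigorously that the bounded additive summand in the ZF denominator can be safely dropped against the linearly-diverging $\max(p_1,p_2)$; this amounts to bounding the ratio of the two and showing it tends to one, after which the rest is purely algebraic bookkeeping of the constants $K_1$, $K_2$, $\theta_1$, $\theta_2$, and $\rho$.
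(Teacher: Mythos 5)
Your overall strategy -- evaluate each SNR as a rational function of $P_R$ under the scaling $p_1=P_R/K_1$, $p_2=P_R/K_2$, keep the ratios that stay pinned, drop the terms that vanish, and extract $\log_2(P_R)$ plus a constant -- is the right one, and indeed the only natural one: the paper states Theorem \ref{theorem:sum rate high SNR} without any proof, so there is no authorial argument to compare against. Your treatment of the delicate ZF case is correct: the bounded additive summand in the inner bracket is $o(\max(p_1,p_2))$, the identity $p_1p_2/\max(p_1,p_2)=\min(p_1,p_2)$ does the rest, and one can check that $(1+1/K_1+1/K_2)\max(K_1,K_2)=1+\max(K_1,K_2)+\max(K_1/K_2,K_2/K_1)$, which reproduces (\ref{eq:asymp R ZF LB}) exactly. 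The $C_{\rm UB}^{(0)}$ computation likewise checks out against (\ref{eq:asymp C UB}).

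The one place you wave your hands is exactly where a discrepancy hides. For the first summand of $R_{\rm LB}^{\rm MR}$ you correctly state that the bracket collapses to $1+1/K_1+\theta_2/(\theta_1 K_2)$; but the numerator is $\theta_2 p_2=\theta_2 P_R/K_2$, so clearing the $1/K_2$ produces the constant $K_2+K_2/K_1+\theta_2/\theta_1$, and the second summand analogously produces $K_1+K_1/K_2+\theta_1/\theta_2$. The printed equation (\ref{eq:asymp R MR LB}) instead shows $K_2+K_1/K_2+\theta_2/\theta_1$ and $K_1+K_2/K_1+\theta_1/\theta_2$, i.e., the cross-ratios transposed. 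These agree only when $K_1=K_2$ (which is the only case invoked later, in Corollary \ref{corol:high SNR}, so nothing downstream breaks), but for $K_1\neq K_2$ the two products genuinely differ. So your claim that ``collecting these constants gives (\ref{eq:asymp R MR LB})'' is not literally true: a careful completion of your own (correct) intermediate step yields a formula that disagrees with the displayed equation, which appears to contain a typo. You should carry the bookkeeping through explicitly and either state the corrected constants or note that the theorem as printed needs the ratios swapped.
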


\begin{figure*}
\begin{eqnarray}
C_{\rm UB}^{(0)}&=&
\log_2(P_R)+\frac{1}{2}\log_2\left(\frac{\theta_1\theta_2}{(K_2+\theta_2/\theta_1)(K_1+\theta_1/\theta_2)}
\right) + o(1) \label{eq:asymp C UB}\\
R_{\rm LB}^{\rm
MR}&=&\log_2(P_R)+\frac{1}{2}\log_2\left(\frac{\theta_1\theta_2}{(K_2+K_1/K_2+\theta_2/\theta_1)(K_1+K_2/K_1+\theta_1/\theta_2)\frac{(1
+ 3 \rho)^2}{(1+\rho)^{4}}}
\right) + o(1) \label{eq:asymp R MR LB}\\
R_{\rm LB}^{\rm
ZF}&=&\log_2(P_R)+\log_2\left(\frac{\theta_1\theta_2}{\left(1+\max(K_1,K_2)+\max\left(K_1/K_2,K_2/K_1\right)\right)\frac{\theta_1+\theta_2}{2(1-\rho)}}
\right) + o(1) \label{eq:asymp R ZF LB}.
\end{eqnarray}
\end{figure*}

It is observed from Theorem \ref{theorem:sum rate high SNR} that at
high SNR both MRR-MRT and ZFR-ZFT (provided that $\rho<1$)
asymptotically achieve the same sum-rate pre-log factor (sum-rate
normalized by $\log_2P_R$ as $P_R\rightarrow \infty$) as that of the
sum-capacity upper bound, $C_{\rm UB}^{(0)}$. However, they may have
different rate gaps from $C_{\rm UB}^{(0)}$, which are constants
independent of $P_R$. In order to gain more insights on the limiting
sum-rate losses of suboptimal beamforming schemes, in the following
corollary, we compare the difference between $C_{\rm UB}^{(0)}$ and
$R_{\rm LB}^{\rm MR}$ or $R_{\rm LB}^{\rm ZF}$ at asymptotically
high SNR in a ``symmetric'' TWRC with equal channel gains, i.e.,
$\|\mv{h}_1\|^2=\|\mv{h}_2\|^2=\theta$, and equal source and relay
transmit powers, i.e., $p_1=p_2=P_R$. In this case, we can obtain a
tighter upper bound on the sum-capacity than $C_{\rm UB}^{(0)}$ as
follows. For the symmetric TWRC, with $\kappa_{12}=\kappa_{21}=1/2$,
it can be easily verified that the maximization over $P_{12}$ and
$P_{21}$ in (\ref{eq:sum rate UB}) is achieved when
$P_{12}=P_{21}=P_R/2$ and as a result a tighter upper bound over
$C_{\rm UB}^{(0)}$ for the symmetric TWRC is obtained as
\begin{equation}\label{eq:sum rate UB 3}
C_{\rm UB}^{(S)}=\log_{2} \left (1 + \frac{\theta
P_R}{3+\frac{1}{\theta P_R}}\right).
\end{equation}

\begin{corollary}\label{corol:high SNR}
At asymptotically high SNR, under the assumptions that
$\theta_1=\theta_2$ and $K_1=K_2=1$, we have
\begin{eqnarray}
C_{\rm UB}^{(S)}-R_{\rm LB}^{\rm
MR}&=&\log_2\left(\frac{1+3\rho}{(1+\rho)^2}\right) \label{eq:rate gap MR}\\
C_{\rm UB}^{(S)}-R_{\rm LB}^{\rm ZF}&=&\log_2\left(
\frac{1}{1-\rho}\right). \label{eq:rate gap ZF}
\end{eqnarray}
\end{corollary}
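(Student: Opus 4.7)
The plan is to specialize the three asymptotic expressions of Theorem \ref{theorem:sum rate high SNR} to the symmetric setting $\theta_1=\theta_2=\theta$, $K_1=K_2=1$, except that for the sum-capacity upper bound I will use the tighter $C_{\rm UB}^{(S)}$ from (\ref{eq:sum rate UB 3}) rather than $C_{\rm UB}^{(0)}$. All three rates grow as $\log_2 P_R$ plus a bounded constant (plus $o(1)$), so the leading $\log_2 P_R$ term cancels in each difference and only a ratio inside $\log_2$ survives.

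First I would extract the high-SNR constant for $C_{\rm UB}^{(S)}$. Since the denominator $3+1/(\theta P_R)$ tends to $3$, a direct expansion $\log_2(1+\theta P_R/3) = \log_2(P_R) + \log_2(\theta/3) + o(1)$ yields $C_{\rm UB}^{(S)} = \log_2(P_R) + \log_2(\theta/3) + o(1)$.

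Next I would substitute $\theta_1=\theta_2=\theta$ and $K_1=K_2=1$ into the bracketed fractions of (\ref{eq:asymp R MR LB}) and (\ref{eq:asymp R ZF LB}). For MRR-MRT, each of the two factors $K_2+K_1/K_2+\theta_2/\theta_1$ and $K_1+K_2/K_1+\theta_1/\theta_2$ collapses to $3$, so the bracket becomes $\theta^2(1+\rho)^4/[9(1+3\rho)^2]$, and after the leading $\tfrac{1}{2}\log_2$ this gives $R_{\rm LB}^{\rm MR}=\log_2 P_R+\log_2\!\bigl(\theta(1+\rho)^2/[3(1+3\rho)]\bigr)+o(1)$. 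For ZFR-ZFT, the prefactor $1+\max(K_1,K_2)+\max(K_1/K_2,K_2/K_1)$ equals $3$ and $(\theta_1+\theta_2)/(2(1-\rho))$ equals $\theta/(1-\rho)$, so the bracket becomes $\theta(1-\rho)/3$ and $R_{\rm LB}^{\rm ZF}=\log_2 P_R+\log_2\!\bigl(\theta(1-\rho)/3\bigr)+o(1)$.

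Finally, subtracting these constants from $\log_2(\theta/3)$ gives $(1+3\rho)/(1+\rho)^2$ and $1/(1-\rho)$ inside the $\log_2$, matching (\ref{eq:rate gap MR}) and (\ref{eq:rate gap ZF}) (the $o(1)$ terms vanish in the asymptotic limit). There is no conceptual difficulty: the proof is purely algebraic specialization of Theorem \ref{theorem:sum rate high SNR}. The only step that requires attention is the replacement of $C_{\rm UB}^{(0)}$ by $C_{\rm UB}^{(S)}$, which is justified by the observation preceding (\ref{eq:sum rate UB 3}) that in the symmetric case the outer minimization in (\ref{eq:sum rate UB}) and inner maximization are both attained at $\kappa_{12}=\kappa_{21}=1/2$ and $P_{12}=P_{21}=P_R/2$, so $C_{\rm UB}^{(S)}$ is a valid (and tighter) sum-capacity upper bound against which the rate gaps should be measured.
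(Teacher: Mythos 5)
Your proposal is correct and is essentially the argument the paper intends (the corollary is stated without a separate proof precisely because it is this direct algebraic specialization of Theorem \ref{theorem:sum rate high SNR} together with the expansion of $C_{\rm UB}^{(S)}$ in (\ref{eq:sum rate UB 3})); all three constants you extract check out and the $\log_2 P_R$ terms cancel as claimed. One small nuance: you do not actually need the outer minimization in (\ref{eq:sum rate UB}) to be attained at $\kappa_{12}=\kappa_{21}=1/2$ — the paper only verifies the inner maximization over $P_{12},P_{21}$ for that fixed $\kappa$, and $C_{\rm sum}(1/2,1/2)$ is already a valid sum-capacity upper bound for any feasible $\kappa$ pair, which is all the corollary requires.
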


It is noted that for $0\leq \rho\leq 1$,
$\frac{1+3\rho}{(1+\rho)^2}$ has the minimum value equal to 0 at
$\rho=0$ or $1$, and the maximum value equal to $9/8$ at $\rho=1/3$.
Therefore, from (\ref{eq:rate gap MR}), it follows that the sum-rate
loss of MRR-MRT from the sum-capacity is at most
$\log_2(9/8)\approx0.1699$ bits/complex dimension at asymptotically
high SNR. On the other hand, it is observed from (\ref{eq:rate gap
ZF}) that the sum-rate loss resulted by ZFR-ZFT increases with
$\rho$, or when $\mv{h}_1$ and $\mv{h}_2$ become more correlated.
This is intuitively correct, since with the increasing channel
correlation, more SNR loss will be incurred to separate the signals
from/to S1 and S2 at R by ZF-based receive/transmit beamforming.
Also note that for MRR-MRT, at $\rho=0$ or $1$, the sum-rate loss is
zero, which is consistent with Lemma \ref{lemma:optimality MR},
while the sum-rate loss is zero for ZFR-ZFT at $\rho=0$, which is
consistent with Lemma \ref{lemma:optimality ZF}.

\section{Numerical Results} \label{sec:numerical results}

In this section, we present numerical results on the achievable
rates of various beamforming schemes considered in this paper, and
compare them with those of other existing schemes in the literature.
For convenience, we assume that $\mv{h}_1$ is a randomly generated
CSCG vector $\sim\mathcal{CN}(\mv{0},\mv{I})$, and $\mv{h}_1$ is
normalized by its own vector norm such that $\|\mv{h}_1\|=1$. We
then generate $\mv{h}_2$ according to
$\mv{h}_2=\sqrt{\rho}\mv{h}_1+\sqrt{1-\rho}\mv{h}_w$, where
$\mv{h}_w$ is also a normalized CSCG random vector, $\|\mv{h}_w\|=1$
and $\mv{h}_1^H\mv{h}_w=0$. Thereby, it can be easily verified that
$\|\mv{h}_2\|=1$ and $\|\mv{h}_1^H\mv{h}_2\|^2=\rho$. It is assumed
that $M=4$ in this section.

\subsection{Capacity Region of ANC-Based TWRC}

\begin{figure}
\centering{
 \epsfxsize=3.4in
    \leavevmode{\epsfbox{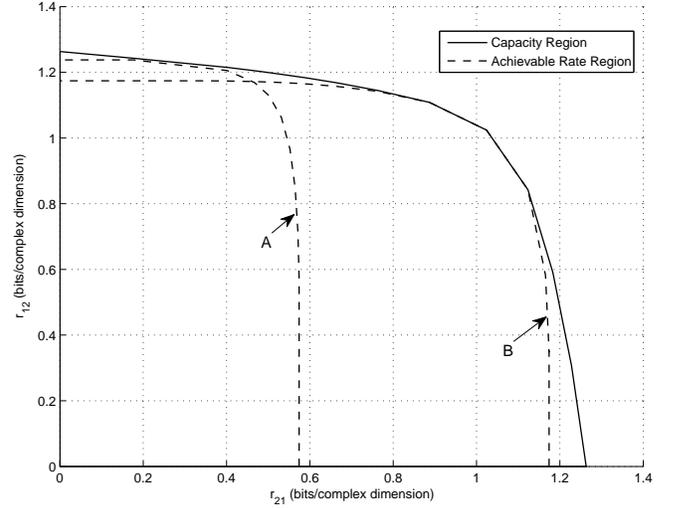}} }
    \caption{Capacity region of the ANC-based TWRC with $M=4$,
$P_1=P_2=P_R=10$, and $\rho=0.5$. Note that the two rate regions
enclosed by the dashed lines are example achievable rate regions
$\mathcal{R}(p_1,p_2,P_R)$'s defined in (\ref{eq:rate region}), each
with some fixed $p_1$ and $p_2$, $p_1\leq P_1$ and $p_2\leq P_2$.
The achievable rate region denoted by A corresponds to $p_1=P_1$ and
$p_2<P_2$, while that denoted by B corresponds to $p_1=P_1$ and
$p_2=P_2$.}\label{fig:capacity region} \vspace{-0.1in}
\end{figure}

Fig. \ref{fig:capacity region} shows the capacity region,
$\mathcal{C}(P_1,P_2,P_R)$ defined in (\ref{eq:capacity region}),
for the ANC-based TWRC with $P_1=P_2=P_R=10$, and $\rho=0.5$. It is
observed that $\mathcal{C}(P_1,P_2,P_R)$ is symmetric over $r_{12}$
and $r_{21}$ in this case. Notice that boundary rate-pairs of
$\mathcal{C}(P_1,P_2,P_R)$ are resulted by the union over those of
achievable rate regions, $\mathcal{R}(p_1,p_2,P_R)$'s defined in
(\ref{eq:rate region}), with different values of $p_1$ and $p_2$,
$0\leq p_1 \leq P_1$ and $0\leq p_2 \leq P_2$. Boundary rate-pairs
of each constituting $\mathcal{R}(p_1,p_2,P_R)$ are obtained by
solving problem (\ref{eq:SRMax}) using Algorithm \ref{algorithm:SR
max} with different rate-profile vectors $\mv{\alpha}$'s. It is
observed that when $p_1=0$ and $p_2=P_2$, $\mathcal{R}(p_1,p_2,P_R)$
collapses into the horizontal rate axis of $r_{21}$, and the maximum
value of $r_{21}$ becomes the capacity of the OWRC with S2
transmitting to S1 via R. Similarly, $\mathcal{R}(P_1,0,P_R)$
collapses into the vertical rate axis of $r_{12}$, and the maximum
value of $r_{12}$ becomes the capacity of the OWRC with S1
transmitting to S2 via R.

\subsection{Achievable Rates of Suboptimal Beamforming Schemes}

\begin{figure}
\centering{
 \epsfxsize=3.4in
    \leavevmode{\epsfbox{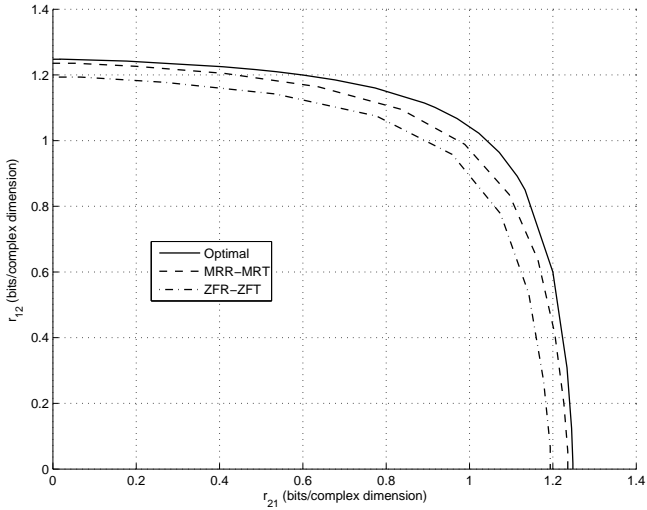}} }
\caption{Achievable rate regions of the ANC-based TWRC with $M=4$,
$p_1=p_2=10$, $P_R=10$, and $\rho=0.1$.}\label{fig:rate comp low
cor} \vspace{-0.1in}
\end{figure}

\begin{figure}
\centering{
 \epsfxsize=3.4in
    \leavevmode{\epsfbox{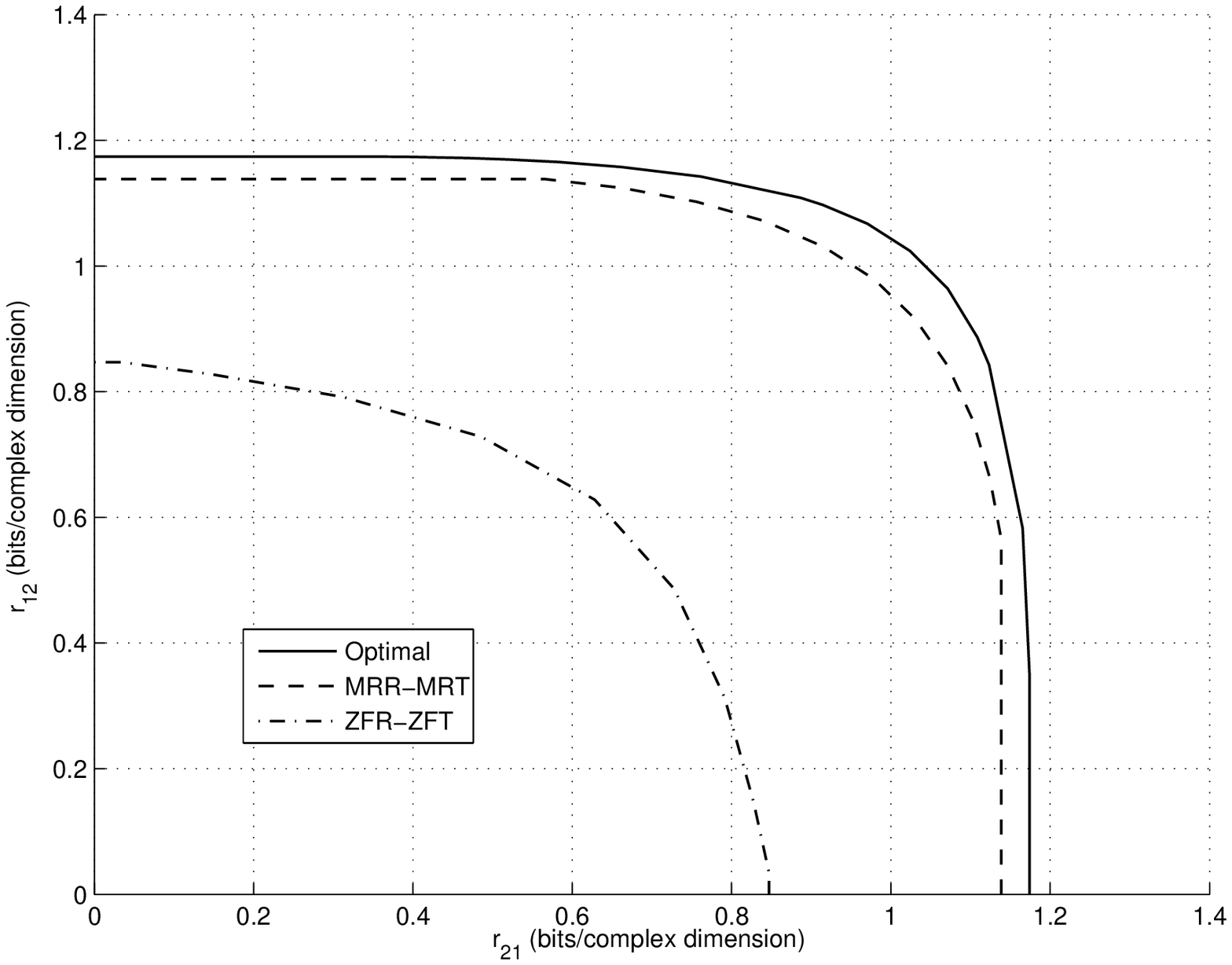}} }
\caption{Achievable rate region of the ANC-based TWRC with $M=4$,
$p_1=p_2=10$, $P_R=10$, and $\rho=0.5$.}\label{fig:rate comp
moderate cor} \vspace{-0.1in}
\end{figure}

\begin{figure}
\centering{
 \epsfxsize=3.4in
    \leavevmode{\epsfbox{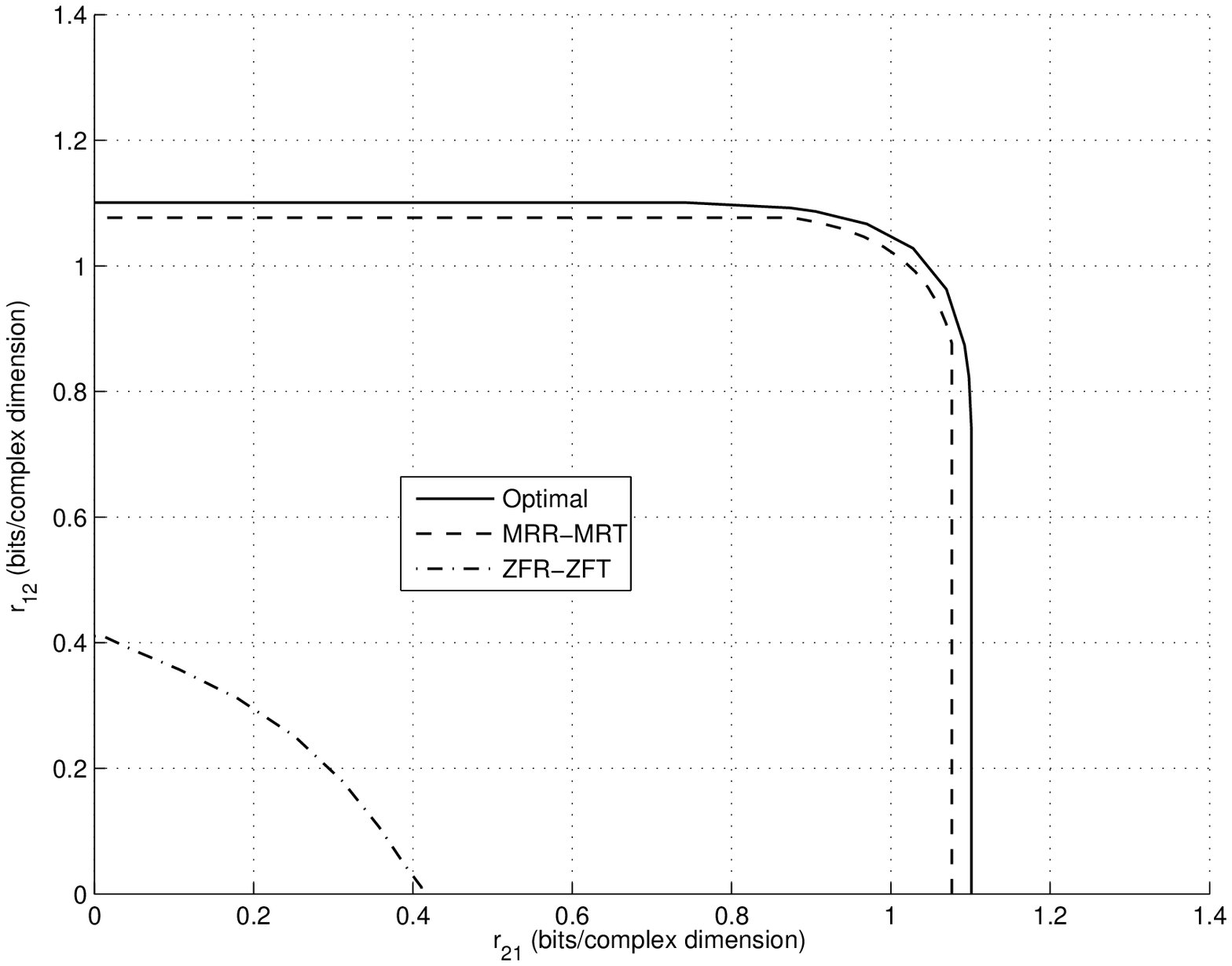}} }
\caption{Achievable rate regions of the ANC-based TWRC with $M=4$,
$p_1=p_2=10$, $P_R=10$, and $\rho=0.8$.}\label{fig:rate comp high
cor}\vspace{-0.1in}
\end{figure}

\begin{figure}
\centering{
 \epsfxsize=3.4in
    \leavevmode{\epsfbox{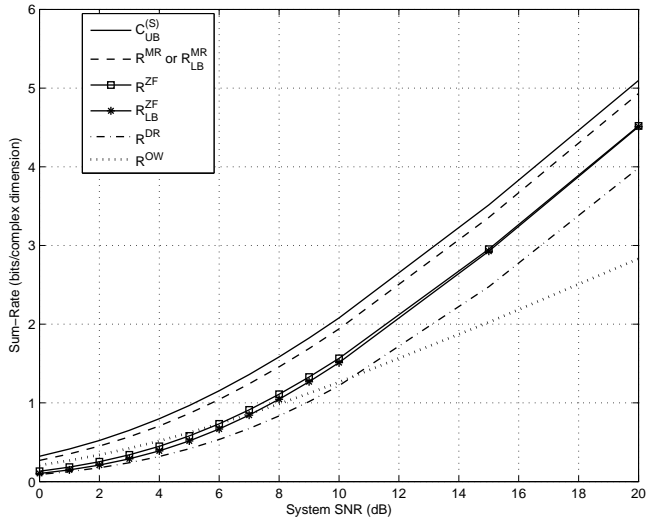}} }
\caption{Sum-rate versus system SNR for the ANC-based TWRC with
$M=4$ and $\rho=1/3$.}\label{fig:sum rate} \vspace{-0.2in}
\end{figure}

Next, we examine the achievable rates of the proposed suboptimal
relay beamforming schemes. Figs. \ref{fig:rate comp low cor},
\ref{fig:rate comp moderate cor}, and \ref{fig:rate comp high cor}
show the achievable rate region, $\mathcal{R}(p_1,p_2,P_R)$, for the
TWRC with different values of $\rho$, $\rho=0.1,0.5$, and 0.8,
respectively. It is assumed that transmit powers at S1 and S2 are
fixed as $p_1=p_2=10$, and the relay transmit power constraint is
$P_R=10$. Three relay beamforming schemes are compared in each of
these figures, which are the optimal scheme (Algorithm
\ref{algorithm:SR max}), the MRR-MRT scheme (\ref{eq:MRR MRT}), and
the ZFR-ZFT scheme (\ref{eq:ZFR ZFT}). Note that boundary rate-pairs
of $\mathcal{R}(p_1,p_2,P_R)$ corresponding to MRR-MRT are obtained
by changing different ratios between $a_{\rm MR}$ and $b_{\rm MR}$
in (\ref{eq:MRR MRT}). $\mathcal{R}(p_1,p_2,P_R)$ for ZFR-ZFT is
obtained in a similar way. It is observed that the achievable rate
region by MRR-MRT is very close to that with the optimal scheme when
the channel correlation coefficient $\rho$ is either small or large,
which is in accord with Lemma \ref{lemma:optimality MR}. Even for
moderate values of $\rho$, e.g., $\rho=0.5$, the rate loss of
MRR-MRT is observed to be negligible, suggesting that MRR-MRT in
fact performs very close to the optimal scheme under different
channel conditions. In contrast, ZFR-ZFT performs close to the
optimal scheme when $\rho$ is small, which is in accord with Lemma
\ref{lemma:optimality ZF}. This is due to the fact that when
$\mv{h}_1$ and $\mv{h}_2$ are sufficiently decorrelated, ZF-based
recieve/transmit beamforming at R is able to suppress the UL/DL
interference between S1 and S2 with small SNR losses. However, as
$\rho$ increases, it is observed that the achievable rates of
ZFR-ZFT degrade significantly as compared to those of the optimal
scheme or MRR-MRT.

In Fig. \ref{fig:sum rate}, we show the achievable sum-rate of TWRC
with $\rho=1/3$ versus the ``system'' SNR. Under the assumption that
transmit powers at S1, S2, and R are all equal, i.e., $p_1=p_2=P_R$,
due to the unit-norm channels and unit-variance noises, the system
SNR is conveniently set equal to $P_R$. Various sum-rate bounds
presented in this paper are shown, including $C_{\rm UB}^{(S)}$ in
(\ref{eq:sum rate UB 3}), $R_{\rm LB}^{\rm MR}$ in (\ref{eq:MR sum
rate}), and $R_{\rm LB}^{\rm ZF}$ in (\ref{eq:ZF sum rate}). In
addition, the actual achievable sum-rates of MRR-MRT and ZFR-ZFT,
denoted as $R^{\rm MR}$ and $R^{\rm ZF}$, respectively, are also
shown for comparison. Note that due to the channel symmetry, $a_{\rm
MR}$ and $b_{\rm MR}$ in (\ref{eq:MRR MRT}) should be equal to
maximize $R^{\rm MR}$; thus, from the derivations in Appendix
\ref{appendix:proof MR sum rate} it follows that $R^{\rm
MR}$=$R_{\rm LB}^{\rm MR}$ for MRR-MRT. On the other hand, for
ZFR-ZFT, $a_{\rm ZF}$ and $b_{\rm ZF}$ in (\ref{eq:ZFR ZFT}) should
also be equal to maximize $R^{\rm ZF}$ in this symmetric-channel
case; however, from Appendix \ref{appendix:proof ZF sum rate} it
follows that even with $a_{\rm ZF}=b_{\rm ZF}$, $R_{\rm LB}^{\rm
ZF}<R^{\rm ZF}$ in general, where $R^{\rm ZF}$ can be obtained from
the RHS of (\ref{eq:inequality ZF 0}). We also show the sum-rates of
the following two heuristic schemes: (1) {\it Direct relaying},
where the relay beamforming matrix is in the form of
$\mv{A}=\zeta\mv{I}$, with $\zeta$ being a constant determined by
$P_R$; (2) {\it One-way alternative relaying}, where four time-slots
are used for one round of information exchange between S1 and S2,
with two for S2 transmitting to S1 via R, and the other two for S1
to S2 via R, and the corresponding optimal relay beamforming
matrices are in the form of $\mv{A}_{21}=\psi\mv{h}_1^*\mv{h}_2^H$
and $\mv{A}_{12}=\psi\mv{h}_2^*\mv{h}_1^H$, respectively, with
$\psi$ determined by $P_R$ \cite{Hua07}, \cite{Vidal07}. We denote
$R^{\rm DR}$ and $R^{\rm OW}$ as the achievable sum-rates of these
two schemes, respectively.

It is observed in Fig. \ref{fig:sum rate} that at asymptotically
high SNR, the sum-rate of MRR-MRT converges to the sum-capacity
upper bound with a constant gap of $0.1699$ bits/complex dimension,
while ZFR-ZFT has a sum-rate gap of $\log_2(1/(1-1/3))=0.5850$
bits/complex dimension. The above observations agree with Corollary
\ref{corol:high SNR}. It is also observed that the lower bound on
the sum-rate by ZFR-ZFT, $R_{\rm LB}^{\rm ZF}$, is very tight at all
SNR values. Notice that $R^{\rm DR}$ and $R^{\rm OW}$ both have
significant gaps from $R^{\rm MR}$ at asymptotically high SNR, since
the former has no beamforming gain at R, and the latter roughly
incurs a loss of half the spectral efficiency due to alternative
relaying.

\subsection{Comparison with DF-Based TWRC}

At last, we compare the capacity region of ANC/AF-based TWRC derived
in this paper with that of DF-based TWRC recently reported in
\cite{Boche08b}, for the same physical TWRC. In order to
differentiate the above two capacity regions, we denote the former
as $\mathcal{C}_{\rm AF}$ and the latter as $\mathcal{C}_{\rm DF}$.
Note that with DF relay operation, R first decodes both messages
from S1 and S2 as in the conventional Gaussian multiple-access
channel (MAC) during the first time-slot; R then re-encodes the
decoded messages jointly into a new message, and transmits it over
the broadcast channel (BC) to both S1 and S2 during the second
time-slot. Each of S1 and S2 decodes the message of the other from
the received signal given the side information on its own previously
transmitted message (in the first time-slot). The achievable rates
of S2 and S1 during the first MAC phase can be expressed as
\cite{Cover}
\begin{align}\label{eq:capacity region MAC}
& \mathcal{C}^{\rm MAC}_{\rm DF}(P_1,P_2) \triangleq
\bigg\{(r_{21},r_{12}): r_{21}\leq
\log_2\left|\mv{I}+P_2\mv{h}_2\mv{h}_2^H\right|, \nonumber
\\  & ~~~~~~~~~~~~~~~~~~~~~~~ r_{12}\leq
\log_2\left|\mv{I}+P_1\mv{h}_1\mv{h}_1^H\right|, \nonumber
\\  & ~~~~~~~~~ r_{21}+r_{12}\leq
\log_2\left|\mv{I}+P_1\mv{h}_1\mv{h}_1^H+P_2\mv{h}_2\mv{h}_2^H
\right| \bigg\}.
\end{align}

The maximum achievable rate-pairs during the second BC phase can be
expressed as \cite{Boche08b}
\begin{align}\label{eq:capacity region BC}
&\mathcal{C}^{\rm BC}_{\rm DF}(P_R)\triangleq \bigcup_{\mv{S}_R:
\mv{S}_R\succeq 0, \mathtt{tr}(\mv{S}_R)\leq
P_R}\bigg\{(r_{21},r_{12}): \nonumber
\\ & r_{21}\leq \log_2\left(1+\mv{h}_1^T\mv{S}_R\mv{h}_1^*\right),
r_{12}\leq \log_2\left(1+\mv{h}_2^T\mv{S}_R\mv{h}_2^*\right) \bigg\}
\end{align}
where $\mv{S}_R$ is the transmit signal covariance matrix at R. Note
that in order to obtain $\mathcal{C}^{\rm BC}_{\rm DF}(P_R)$ in
(\ref{eq:capacity region BC}), we need to solve a sequence of
optimization (WSRMax) problems expressed below with different
nonnegative rate weights $w_{21}$ and $w_{12}$.
\begin{align}\label{eq:WSRMax DF}
\mathop{\mathtt{Max.}}_{\mv{S}_R} &~~
w_{21} \log_2\left(1+\mv{h}_1^T\mv{S}_R\mv{h}_1^*\right) + w_{12}\log_2\left(1+\mv{h}_2^T\mv{S}_R\mv{h}_2^*\right)\nonumber \\
\mathtt{s.t.} &~~ \mathtt{tr}(\mv{S}_R)\leq P_R, \ \mv{S}_R\succeq
0.
\end{align}
Since the above problem is convex, it can be solved by standard
convex optimization techniques, e.g., the interior-point method
\cite{Boydbook}. Unlike the AF relay operation, DF relay operation
allows different time allocations between the MAC and BC time-slots.
Let $\tau$ and $1-\tau$ denote the percentages of the total time
allocated to MAC phase and BC phase, respectively. Then, combining
both MAC and BC phases yields the capacity region for DF-based TWRC,
$\mathcal{C}_{\rm DF}(P_1,P_2,P_R)$, expressed as
\begin{eqnarray}\label{eq:capacity region DF}
\bigcup_{\tau:0\leq\tau\leq 1} \left( \tau\cdot\mathcal{C}^{\rm
MAC}_{\rm DF}(P_1,P_2)\bigcap (1-\tau)\cdot\mathcal{C}^{\rm BC}_{\rm
DF}(P_R)\right).
\end{eqnarray}

In Figs. \ref{fig:DF AF comp HCor} and \ref{fig:DF AF comp LCor}, we
show $\frac{1}{2}\mathcal{C}^{\rm MAC}_{\rm DF}$,
$\frac{1}{2}\mathcal{C}^{\rm BC}_{\rm DF}$, $\mathcal{C}_{\rm DF}$,
and $\mathcal{C}_{\rm AF}$ for $\rho=0.95$ and $0.8$, respectively.
It is assumed that $P_1=P_2=P_R=100$. Note that in each figure,
$\mathcal{C}_{\rm DF}$ can be visualized as the union of rate
regions, each of which corresponds to the intersection of
$\frac{1}{2}\mathcal{C}^{\rm MAC}_{\rm DF}$ and
$\frac{1}{2}\mathcal{C}^{\rm BC}_{\rm DF}$ after they are properly
scaled by $2\tau$ and $2(1-\tau)$, respectively, for a particular
value of $\tau$. It is observed that the DF-based TWRC in general
has a larger capacity region over the AF-based counterpart.
Furthermore, it is observed that this capacity gain enlarges as
$\rho$ decreases, i.e., the channels $\mv{h}_1$ and $\mv{h}_2$
become more weakly correlated. This is mainly due to the fact that
when the UL channels become less correlated, R is more capable of
decoding the messages from S1 and S2 during the MAC phase and as a
result, $\frac{1}{2}\mathcal{C}^{\rm MAC}_{\rm DF}$ is observed to
get enlarged as $\rho$ decreases. If we want to draw a more fair
comparison between AF- and DF-based TWRCs with the same energy
consumption, we may assume that for the DF-based TWRC,
equal-duration time-slots are assigned to the MAC and BC phases,
i.e., $\tau=1/2$, the same as the AF case. As such, since for both
$\rho=0.95$ and $0.8$, $\frac{1}{2}\mathcal{C}^{\rm MAC}_{\rm DF}$
appears as a subset of $\frac{1}{2}\mathcal{C}^{\rm BC}_{\rm DF}$,
it concludes that the capacity region for DF-based TWRC with the
fixed $\tau=1/2$ is simply $\frac{1}{2}\mathcal{C}^{\rm MAC}_{\rm
DF}$. Interestingly, it is observed that $\mathcal{C}_{\rm AF}$
improves over $\frac{1}{2}\mathcal{C}^{\rm MAC}_{\rm DF}$ when
$\rho=0.95$ in the region where the values of $r_{21}$ and $r_{12}$
are close to each other. Notice that in this region the sum-capacity
in the AF case is achieved. Since DF relaying incurs larger
complexity for encoding/decoding at R as compared with AF relaying,
AF relaying may be a more suitable solution in practice where strong
channel correlation is encountered.\footnote{Note that in the
extreme case of $\rho=1$, the multi-antenna TWRC becomes equivalent
to the single-antenna TWRC studied in \cite{ANC}.}  However, in the
case of $\rho=0.8$, it is observed that the capacity improvement of
$\mathcal{C}_{\rm AF}$ over $\frac{1}{2}\mathcal{C}^{\rm MAC}_{\rm
DF}$ diminishes.

\begin{figure}
\centering{
 \epsfxsize=3.4in
    \leavevmode{\epsfbox{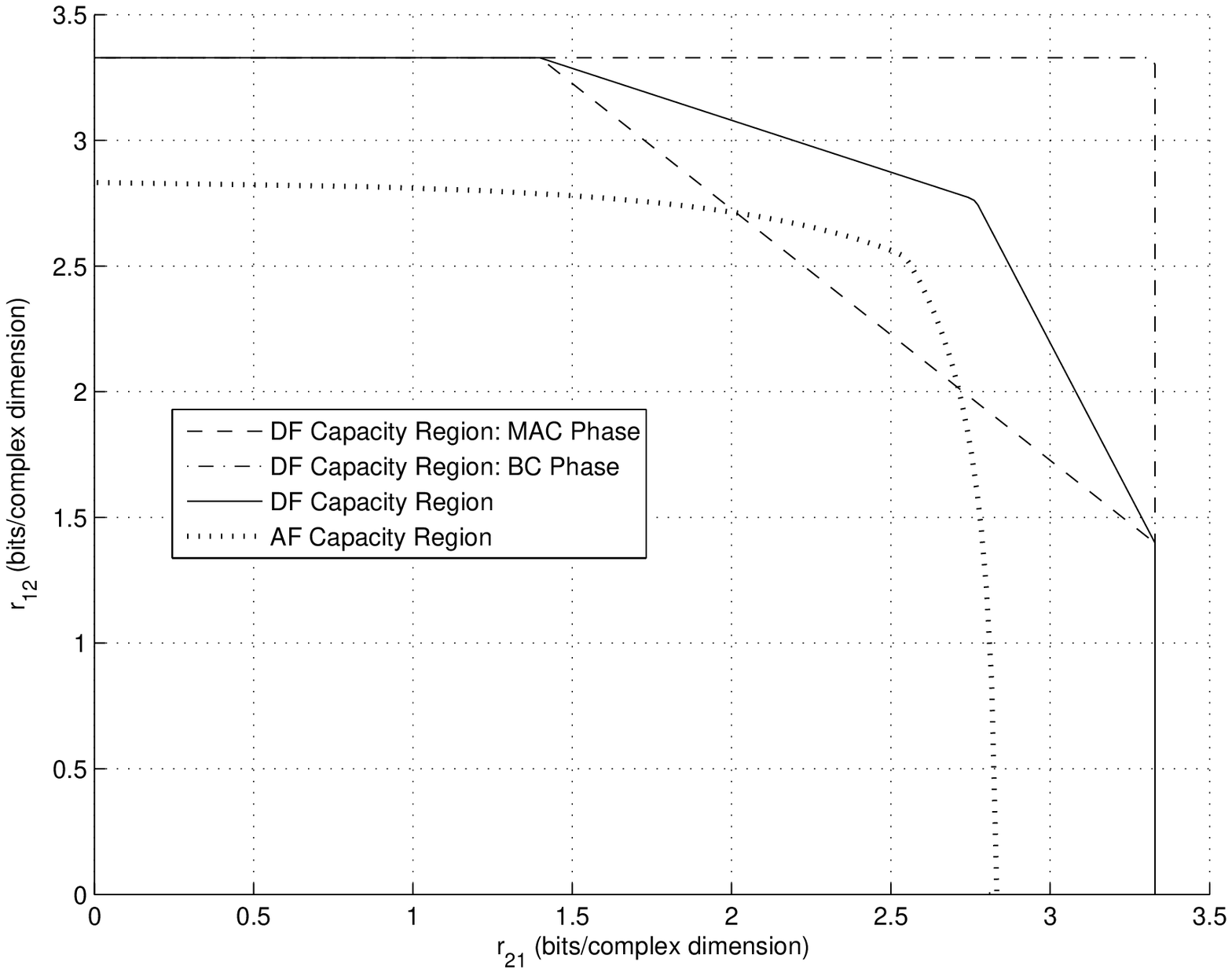}} }
\caption{Comparison of capacity region for ANC/AF-based versus
DF-based TWRC with $M=4$, $P_1=P_2=P_R=100$, and
$\rho=0.95$.}\label{fig:DF AF comp HCor} \vspace{-0.1in}
\end{figure}

\begin{figure}
\centering{
 \epsfxsize=3.4in
    \leavevmode{\epsfbox{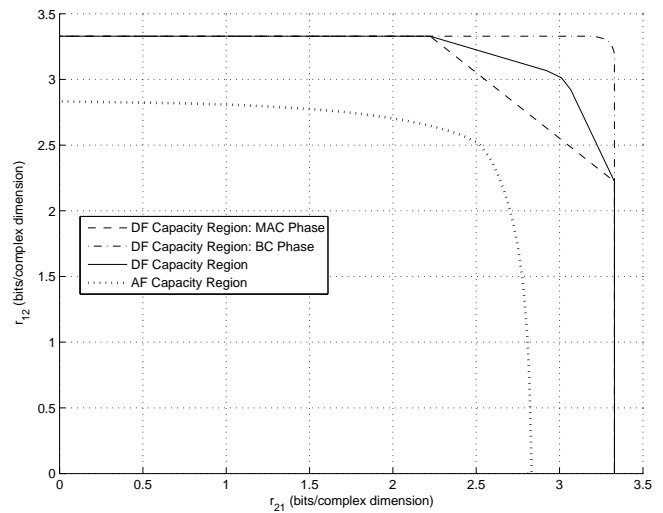}} }
\caption{Comparison of capacity region for ANC/AF-based versus
DF-based TWRC with $M=4$, $P_1=P_2=P_R=100$, and
$\rho=0.8$.}\label{fig:DF AF comp LCor} \vspace{-0.1in}
\end{figure}

\section{Conclusion and Future Work}\label{sec:conclusions}

This paper studied the fundamental capacity limits of ANC/AF-based
TWRC with multi-antennas at the relay. It was shown that the
standard method to characterize the capacity region via WSRMax is
not directly applicable to ANC-based TWRC due to the non-convexity
of the optimization problem. Therefore, we proposed an alternative
method to characterize the capacity region of TWRC by applying the
idea of rate profile. As a byproduct, we also provided the solution
for the relay power minimization problem under given SNR constraints
at the receivers. Due to the bidirectional transmission as well as
the self-interference cancelation by ANC, we found that the design
of relay beamforming in TWRC differs very much from the conventional
designs for the OWRC or the UL/DL beamforming in the traditional
cellular network. We presented the general form of the optimal relay
beamforming structure in TWRC, as well as two low-complexity
suboptimal schemes, namely, MRR-MRT and ZFR-ZFT. It was shown that
ZFR-ZFT with the objective of suppressing the UL and DL
interferences between S1 and S2 may not perform well in the case of
strong channel correlation, while MRR-MRT with the objective of
maximizing the total forwarded signal power from R to S1 and S2
achieves sum-rates and rate regions close to the optimal ones under
various SNR and channel conditions. This suggests that MRR-MRT can
be a good solution from an implementation viewpoint. It was also
shown that the ANC/AF-based TWRC can have a capacity gain over the
DF-based TWRC for sufficiently large channel correlations and equal
MAC and BC time-durations.

Future work beyond this paper may include the joint design of source
and relay beamforming when each source is also equipped with
multi-antennas, the relay beamforming design for more than one
source-pairs with different combined unicast/multicast
transmissions, and the design of a hybrid AF/DF scheme that probably
improves the performances of both AF- and DF-based TWRCs. In
addition, the study of estimate-and-forward (EF) relay operations
for the multi-antenna TWRC is also appealing.

\appendices

\section{Proof of Theorem \ref{theorem:optimal beamforming}}\label{appendix:proof beamforming
optimality}

Without loss of generality, we can express $\mv{A}$ as
\begin{eqnarray}
\mv{A}&=&[\mv{U}^*, (\mv{U}^{\bot})^*] \left[\begin{array}{ll}
\mv{B} & \mv{C} \\ \mv{D} & \mv{E} \end{array} \right] [\mv{U},
\mv{U}^{\bot}]^H \\
&=&
\mv{U}^*\mv{BU}^H+\mv{U}^*\mv{C}(\mv{U}^{\bot})^H+(\mv{U}^{\bot})^*\mv{DU}^H
\nonumber \\ && +(\mv{U}^{\bot})^*\mv{E}(\mv{U}^{\bot})^H
\label{eq:matrix sum}
\end{eqnarray}
where $\mv{U}^{\bot}\in\mathbb{C}^{M\times (M-2)}$,
$\mv{U}^{\bot}(\mv{U}^{\bot})^H=\mv{I}-\mv{U}\mv{U}^H$, and
$\mv{B}$, $\mv{C}$, $\mv{D}$, and $\mv{E}$ are complex matrices of
size $2\times 2$, $2\times (M-2)$, $(M-2)\times 2$, and $(M-2)\times
(M-2)$, respectively. First, it can be shown that in (\ref{eq:rate
21}),
$|\mv{h}_1^T\mv{A}\mv{h}_2|^2=|\mv{h}_1^T\mv{U}^*\mv{B}\mv{U}^H\mv{h}_2|^2$,
and
$\|\mv{A}^H\mv{h}_1^*\|^2=\|\mv{B}^H\mv{U}^T\mv{h}_1^*\|^2+\|\mv{C}^H\mv{U}^T\mv{h}_1^*\|^2\geq
\|\mv{B}^H\mv{U}^T\mv{h}_1^*\|^2$. Thus, it follows that $r_{21}$
does not depend on $\mv{D}$ and $\mv{E}$, and is maximized when
$\mv{C}=\mv{0}$. Similarly, from (\ref{eq:rate 12}), we can show
that $r_{12}$ is also not related to $\mv{D}$ and $\mv{E}$, and is
maximized when $\mv{C}=\mv{0}$. Next, for the relay power constraint
(\ref{eq:rate region}), from (\ref{eq:relay power}) it can be shown
that $p_R$ is minimized when $\mv{C}$, $\mv{D}$, and $\mv{E}$ are
all equal to $\mv{0}$. Since each rate-pair on the boundary of
$\mathcal{R}(p_1,p_2,P_R)$ defined in (\ref{eq:rate region}) must
maximize $r_{21}$ and $r_{12}$ subject to the given $P_R$, it
concludes that all $\mv{C}$, $\mv{D}$, and $\mv{E}$ in the
corresponding $\mv{A}$ should be $\mv{0}$. Thus, from
(\ref{eq:matrix sum}), we conclude that $\mv{A}=\mv{U}^*\mv{BU}^H$.

\section{Proof of Lemma \ref{lemma:zero correlation}}\label{appendix:proof zero correlation}

In the case of  $\mv{h}_1\bot\mv{h}_2$, it can be easily shown that
$\mv{g}_1=[\|\mv{h}_1\|, 0]^T$ and $\mv{g}_2=[0, \|\mv{h}_2\|]^T$.
Let $\mv{B}=\left[{\footnotesize
\begin{array}{cc}
a & c\\
d & b \end{array} }\right]$. Substituting $\mv{g}_1$ and $\mv{g}_2$
into (\ref{eq:rate region 2}) yields
\begin{equation}
r_{21} \leq \frac{1}{2}\log_{2} \left (1 +
\frac{\|\mv{h}_1\|^2\|\mv{h}_2\|^2|c|^2p_2
}{\|\mv{h}_1\|^2(|a|^2+|c|^2)+1}\right)
\end{equation}
\begin{equation}
 r_{12} \leq
\frac{1}{2}\log_{2} \left (1 +
\frac{\|\mv{h}_1\|^2\|\mv{h}_2\|^2|d|^2p_1
}{\|\mv{h}_2\|^2(|b|^2+|d|^2)+1}\right)
\end{equation}
\begin{eqnarray}
\|\mv{h}_1\|^2(|a|^2+|d|^2)+\|\mv{h}_2\|^2(|c|^2+|b|^2)\nonumber
\\ +|a|^2+|b|^2+|c|^2+|d|^2\leq P_R.
\end{eqnarray}

It then follows that $r_{21}$ and $r_{12}$ are maximized along with
the relay transmit power being minimized when $|a|=0$ and $|b|=0$
and, thus, $a=b=0$. Since in the above rate and power expressions
only $|c|^2$ and $|d|^2$ are involved, we can assume w.l.o.g. that
$c\geq 0$ and $d\geq 0$.

\section{Proof of Lemma \ref{lemma:unit correlation}}\label{appendix:proof unit correlation}

In the case of  $\mv{h}_1\parallel \mv{h}_2$, it can be easily shown
that $\mv{g}_1=[\|\mv{h}_1\|, 0]^T$ and $\mv{g}_2=[ \|\mv{h}_2\|,
0]^T$. Let $\mv{B}=\left[{\footnotesize
\begin{array}{cc}
a & c\\
d & b \end{array} }\right]$. Similarly like the proof of Lemma
\ref{lemma:zero correlation} in Appendix \ref{appendix:proof zero
correlation}, by substituting $\mv{g}_1$ and $\mv{g}_2$ into
(\ref{eq:rate region 2}), it follows that $r_{21}$ and $r_{12}$ are
maximized along with the relay transmit power being minimized when
$b=c=d=0$, and  we can assume w.l.o.g. that $a\geq 0$.

\section{Proof of Convergence of  Algorithm \ref{algorithm:SR max}}
\label{appendix:proof algorithm}

In this appendix, we prove that Algorithm \ref{algorithm:SR max}
guarantees the convergence of $r_{\min}$ to the optimal solution of
problem (\ref{eq:SRMax}). First, we show that $r_{\min}$ is a
feasible solution of problem (\ref{eq:SRMax}): Given $R_{\rm
sum}=r_{\min}$, from Algorithm \ref{algorithm:SR max} it is easily
verified that all the three constraints of problem (\ref{eq:SRMax})
are satisfied. Secondly, suppose that there exists another feasible
solution $\hat{r}$ for problem (\ref{eq:SRMax}) such that
$\hat{r}>r_{\min}+\delta_r$ (note that $\delta_r$ can be chosen
arbitrarily small in Algorithm \ref{algorithm:SR max}). However,
this contradicts the fact that $r_{\max}$, $r_{\max}\leq
r_{\min}+\delta_r <\hat{r}$, has been proven in Algorithm
\ref{algorithm:SR max} to be an infeasible solution of problem
(\ref{eq:SRMax}) since the required minimum power, $p_R^{\star}$, is
larger than the given constraint $P_R$ in problem (\ref{eq:SRMax}).
Therefore, by contradiction, it follows that there does not exist
such a feasible solution $\hat{r}$ for problem (\ref{eq:SRMax}).
From the above discussions, it concludes that the feasible solution
$r_{\min}$ is at most $\delta_r$ lower than the optimal solution of
problem (\ref{eq:SRMax}). By letting $\delta_r\rightarrow 0$,
convergence of Algorithm \ref{algorithm:SR max} is thus proved.

\section{Proof of Theorem \ref{theorem:rank_one}}
\label{appendix:proof rank one}

Given $\mv{X}^\star$, first we know that at least one of the two
inequality constraints in~(\ref{eq:PowerMin_SDP}) is active at the
optimal point, i.e., we have either
$\mathtt{tr}(\mv{F}_1\mv{X}^\star)= 1$ or
$\mathtt{tr}(\mv{F}_2\mv{X}^\star)=1$, or both. This fact can be
proved by contradiction: If at $\mv{X}^\star$ both
$\mathtt{tr}(\mv{F}_1\mv{X}^\star)> 1$ and
$\mathtt{tr}(\mv{F}_2\mv{X}^\star)>1$ hold, we could always find a
$t$ with $0<t<1$  such that $\mv{Y}^\star=t\mv{X}^\star$ and
$\min{\left(\mathtt{tr}(\mv{F}_1\mv{Y}^\star),
\mathtt{tr}(\mv{F}_2\mv{Y}^\star)\right)}=1$. We could easily see
that
$\mathtt{tr}(\mv{F}_0\mv{Y}^\star)<\mathtt{tr}(\mv{F}_0\mv{X}^\star)$,
which means that $\mv{X}^\star$ could not be the optimal solution,
i.e., contradiction holds.

From now on, we assume w.l.o.g. that
$\mathtt{tr}(\mv{F}_1\mv{X}^\star)= 1$ such that we have
$\mathtt{tr}((\mv{F}_2-\mv{F}_1)\mv{X}^\star)\ge 0$. To facilitate
the proof for Theorem~\ref{theorem:rank_one}, let us first give the
following lemma, which is based on Lemma~1 given in~\cite{Ye_Zhang},
and the proof also follows a similar way to that in~\cite{Ye_Zhang}
(so it is skipped here).
\begin{lemma}\label{lemma:decomposition}
Given that $\mathtt{tr}((\mv{F}_2-\mv{F}_1)\mv{X}^\star)\ge 0$,
there exists a decomposition for $\mv{X}^\star$ such that
$$\mv{X}^\star=\sum_{i=1}^{r}\mv{x}_i\mv{x}_i^T$$ and
$\mv{x}_i^T(\mv{F}_2-\mv{F}_1)\mv{x}_i\ge 0$ for all $i=1,\ldots,r$.
\end{lemma}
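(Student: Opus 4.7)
The plan is to recast the lemma as a purely linear-algebra statement about a single symmetric matrix and then prove that statement by induction on the rank $r$ using a one-parameter rotation argument. Since $\mv{X}^\star\succeq 0$ has rank $r$, write $\mv{X}^\star=\mv{L}\mv{L}^T$ with $\mv{L}$ a real matrix of full column rank $r$. Every rank-$r$ decomposition $\mv{X}^\star=\sum_{i=1}^{r}\mv{x}_i\mv{x}_i^T$ arises as $\mv{x}_i=\mv{L}\mv{q}_i$ for some orthonormal collection $\{\mv{q}_i\}_{i=1}^{r}\subset\mathbb{R}^{r}$, because any two full-column-rank factorizations of the same PSD matrix differ by right multiplication by an orthogonal matrix. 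Setting $\mv{M}\triangleq\mv{L}^T(\mv{F}_2-\mv{F}_1)\mv{L}$, one has $\mv{x}_i^T(\mv{F}_2-\mv{F}_1)\mv{x}_i=\mv{q}_i^T\mv{M}\mv{q}_i$, and the hypothesis $\mathtt{tr}((\mv{F}_2-\mv{F}_1)\mv{X}^\star)\ge 0$ translates to $\mathtt{tr}(\mv{M})\ge 0$.

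The lemma is therefore reduced to the following auxiliary claim: for every real symmetric $\mv{M}\in\mathbb{R}^{r\times r}$ with $\mathtt{tr}(\mv{M})\ge 0$, there is an orthonormal basis $\{\mv{q}_i\}_{i=1}^{r}$ of $\mathbb{R}^{r}$ satisfying $\mv{q}_i^T\mv{M}\mv{q}_i\ge 0$ for every $i$. I would prove this by induction on $r$. The base case $r=1$ is immediate, since the unique eigenvalue equals $\mathtt{tr}(\mv{M})\ge 0$. For the inductive step, if $\mv{M}\succeq 0$ any eigenbasis works; otherwise let $\lambda_-<0$ and $\lambda_+>0$ denote the minimum and maximum eigenvalues of $\mv{M}$ (the latter must exist, since otherwise $\mathtt{tr}(\mv{M})<0$) with unit eigenvectors $\mv{v}_-,\mv{v}_+$. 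Parameterize unit vectors in the two-dimensional span of $\mv{v}_-$ and $\mv{v}_+$ by $\mv{w}(\theta)=\cos\theta\,\mv{v}_-+\sin\theta\,\mv{v}_+$; then
\[
\mv{w}(\theta)^T\mv{M}\mv{w}(\theta)=\lambda_-\cos^2\theta+\lambda_+\sin^2\theta
\]
is continuous in $\theta$ and attains the values $\lambda_-<0$ at $\theta=0$ and $\lambda_+>0$ at $\theta=\pi/2$, so by the intermediate value theorem there is a $\theta_0$ at which this quadratic form vanishes. Set $\mv{q}_1=\mv{w}(\theta_0)$.

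To close the induction, extend $\mv{q}_1$ to any orthonormal basis of $\mathbb{R}^{r}$ and observe that the trace of the compression of $\mv{M}$ onto the $(r-1)$-dimensional subspace $V=\mv{q}_1^{\perp}$ equals $\mathtt{tr}(\mv{M})-\mv{q}_1^T\mv{M}\mv{q}_1=\mathtt{tr}(\mv{M})\ge 0$. Applying the inductive hypothesis to this compressed operator produces orthonormal $\mv{q}_2,\ldots,\mv{q}_r\in V$ with $\mv{q}_i^T\mv{M}\mv{q}_i\ge 0$ for $i\ge 2$; combined with $\mv{q}_1$ this gives the required basis, and transforming back via $\mv{x}_i=\mv{L}\mv{q}_i$ yields the desired decomposition of $\mv{X}^\star$.

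The main subtlety is in the inductive step: it is crucial that $\mv{q}_1$ is chosen so that $\mv{q}_1^T\mv{M}\mv{q}_1$ is \emph{exactly} zero rather than merely non-negative, since this exact vanishing is what preserves the trace hypothesis $\mathtt{tr}(\mv{M}|_V)\ge 0$ on the restricted problem and keeps the induction going. A naive pairwise-rotation scheme that balances one "negative" term against one "positive" term would stall whenever no single positive value dominates a given negative one; locating $\mv{q}_1$ inside the two-dimensional span of the extremal eigenvectors via the intermediate value theorem sidesteps that obstacle and makes the recursion clean.
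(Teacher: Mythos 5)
Your proof is correct, and it takes a genuinely different route from the one the paper relies on. The paper does not actually write out a proof of this lemma: it defers entirely to Lemma~1 of the cited Ye--Zhang reference, whose argument is an iterative pairwise procedure --- start from an arbitrary rank-one decomposition, find one term with a positive value of the form and one with a negative value, and replace the pair $\mv{p}_i\mv{p}_i^T+\mv{p}_j\mv{p}_j^T$ by $\frac{1}{1+\gamma^2}\left((\mv{p}_i+\gamma\mv{p}_j)(\mv{p}_i+\gamma\mv{p}_j)^T+(\gamma\mv{p}_i-\mv{p}_j)(\gamma\mv{p}_i-\mv{p}_j)^T\right)$ with $\gamma$ chosen (as a real root of a quadratic, which exists precisely because the two values have opposite signs) so that one new term evaluates to exactly zero; iterating strictly reduces the number of negative terms while preserving the sum, which is $\mathtt{tr}((\mv{F}_2-\mv{F}_1)\mv{X}^\star)\ge 0$. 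Your argument instead factors $\mv{X}^\star=\mv{L}\mv{L}^T$, compresses $\mv{F}_2-\mv{F}_1$ to $\mv{M}=\mv{L}^T(\mv{F}_2-\mv{F}_1)\mv{L}$, and proves by induction on $r$ that any symmetric matrix with nonnegative trace admits an orthonormal basis with all diagonal quadratic forms nonnegative, peeling off at each step a unit vector in the span of the extremal eigenvectors on which the form vanishes exactly; this is essentially a hands-on instance of the Schur--Horn majorization phenomenon. All the steps check out: the reduction $\mathtt{tr}(\mv{M})=\mathtt{tr}((\mv{F}_2-\mv{F}_1)\mv{X}^\star)$ is right, the existence of $\lambda_+>0$ when $\lambda_-<0$ follows from the trace condition, your insistence on $\mv{q}_1^T\mv{M}\mv{q}_1=0$ (rather than merely $\ge 0$) is exactly what keeps the trace hypothesis intact for the compressed operator, and $\mv{x}_i=\mv{L}\mv{q}_i$ recovers a decomposition into exactly $r$ nonzero rank-one terms. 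The Ye--Zhang route has the advantage of being directly algorithmic on the original decomposition (which the paper exploits in its reconstruction routine), while yours is arguably cleaner conceptually and is still effective, since $\theta_0$ is available in closed form via $\tan^2\theta_0=-\lambda_-/\lambda_+$. One small remark: the claim that \emph{every} rank-$r$ decomposition arises as $\mv{x}_i=\mv{L}\mv{q}_i$ with $\{\mv{q}_i\}$ orthonormal is not needed for the proof (only the converse direction is used), so you could drop it without loss.
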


Based on the above lemma, let $y_{ij}=\mv{x}_j^T \mv{F}_i\mv{x}_j$,
$i=0,1,2$, and $j=1,\ldots,r$. Now consider the following linear
program
\begin{align}\label{eq:LP1}
\mathop{\mathtt{Min.}}_{t_1,\ldots,t_r} &~~ \sum_{j=1}^{r} y_{0j} t_j \nonumber \\
\mathtt{s.t.} &~~ \sum_{j=1}^r y_{1j} t_j\geq 1, ~ \sum_{j=1}^r y_{2j} t_j\geq 1 \nonumber \\
&~~ t_j\geq 0,\hspace{4pt} j=1,\ldots,r.
\end{align}
We see that for any feasible set of $t_1,\ldots, t_r$ such that all
the inequality constraints are satisfied, $\mv{X}=\sum_{j=1}^r t_j
(\mv{x}_j\mv{x}_j^T)$ is a feasible solution for the SDP problem
(\ref{eq:PowerMin_SDP}). As such, the minimum objective value of the
above linear program is same as that of the SDP problem
(\ref{eq:PowerMin_SDP}), and one such an optimal point is
$t_1=\ldots=t_r=1$ (which corresponds to $\mv{X}=\sum_{j=1}^r t_j
(\mv{x}_j\mv{x}_j^T)=\sum_{j=1}^r \mv{x}_j\mv{x}_j^T=\mv{X}^\star$).
Note that the optimal points may not be unique.

Furthermore, given that $\mv{x}_i^T(\mv{F}_2-\mv{F}_1)\mv{x}_i\ge 0$
for all $i=1,\ldots,r$ from Lemma~\ref{lemma:decomposition}, we have
$y_{2j}\ge y_{1j}$ for all $j$'s. Therefore, $\sum_{j=1}^r y_{1j}
t_j\geq 1$ implies $\sum_{j=1}^r y_{2j} t_j\geq 1$, i.e., the second
inequality constraint in (\ref{eq:LP1}) is redundant. Thus,
(\ref{eq:LP1}) can be recast as
\begin{align}\label{eq:LP2}
\mathop{\mathtt{Min.}}_{t_1,\ldots,t_r} &~~ \sum_{j=1}^{r} y_{0j} t_j \nonumber \\
\mathtt{s.t.} &~~ \sum_{j=1}^r y_{1j} t_j\geq 1 \nonumber \\
&~~ t_j\geq 0,\hspace{4pt} j=1,\ldots,r.
\end{align}

When $\mv{X}^\star$ can be found for the SDP
problem~(\ref{eq:PowerMin_SDP}), it means that the optimal objective
values for both the SDP problem~(\ref{eq:PowerMin_SDP}) and the
linear program problem~(\ref{eq:LP2}) are bounded, which implies
that (\ref{eq:LP2}) must have one basic optimal feasible solution,
at which at least $r$ inequality constraints are active (to define
an optimal vertex point in the feasible region). Since we only have
$r+1$ inequality constraints in (\ref{eq:LP2}), at most one $t_j$ is
positive. Actually, we have exactly one $t_j$ positive; otherwise,
all zero $t_j$'s could not be a feasible solution. At such a basic
optimal feasible solution, if we have $t^\star_k>0$ and $t_j=0$,
$j\neq k$ with $1\le j, k \le r$, we could infer that there exists
an optimal rank-one solution for the SDP
problem~(\ref{eq:PowerMin_SDP}), which could be constructed as
$$\mv{X}^{\star\star}=t_k^\star(\mv{x}_k\mv{x}_k^T).$$

This completes the proof for Theorem~\ref{theorem:rank_one}.

At last, we present a routine to obtain an optimal rank-one solution
for problem (\ref{eq:PowerMin SNR SDR}) from $\mv{X}^\star$ as
follows:
\begin{enumerate}
\item Decompose $\mv{X}^\star$ in reference to $\mv{F}_2-\mv{F}_1$ as in Lemma~\ref{lemma:decomposition} (For detailed procedure, refer to the proof for Lemma 1
in~\cite{Ye_Zhang}).

\item Construct the linear program problem as shown in~(\ref{eq:LP2}), and solve one basic optimal feasible solution. Such an
algorithm could be based on solving $r$ parallel sub-problems, where
at each sub-problem only one $t_j$ is allowed to take non-zero
values. Then the achieved minimum objective values from the $r$
sub-problems are compared to find the global minimum solution.

\item Given the single optimal positive $t_k^\star$, the rank-one optimal solution for both (\ref{eq:PowerMin SNR SDR}) and (\ref{eq:PowerMin_SDP}) is constructed as $\mv{X}^{\star\star}=t_k^\star(\mv{x}_k\mv{x}_k^T)$.
\end{enumerate}

\section{Proof of Lemma \ref{lemma:rate LB MR}}\label{appendix:proof MR sum rate}

Let $a_{\rm MR}=b_{\rm MR}=\nu$ in $\mv{A}_{\rm MR}$ given by
(\ref{eq:MRR MRT}). We can then show the following equalities:
\begin{eqnarray}
|\mv{h}_1^T\mv{A}_{\rm MR}\mv{h}_2|=|\mv{h}_2^T\mv{A}_{\rm MR}\mv{h}_1|=\nu\theta_1\theta_2(1+\rho) \\
\|\mv{A}_{\rm
MR}^H\mv{h}_1^*\|^2=\|\mv{A}_{\rm MR}\mv{h}_1\|^2=\nu^2\theta_1^2\theta_2(1+3\rho) \\
\|\mv{A}_{\rm
MR}^H\mv{h}_2^*\|^2=\|\mv{A}_{\rm MR}\mv{h}_2\|^2=\nu^2\theta_1\theta_2^2(1+3\rho) \\
\mathtt{tr}(\mv{A}_{\rm MR}\mv{A}_{\rm
MR}^H)=2\nu^2\theta_1\theta_2(1+\rho).
\end{eqnarray}
Let the relay transmit power $p_R$ in (\ref{eq:relay power}) be
equal to the maximum value $P_R$. Using the above equalities, from
(\ref{eq:relay power}) it follows that
\begin{equation}\label{eq:nu MR}
\nu^2=\frac{P_R}{\theta_1\theta_2(1+3\rho)(\theta_1p_1+\theta_2p_2)+2\theta_1\theta_2(1+\rho)}.
\end{equation}
Substituting (\ref{eq:nu MR}) into the above equalities, and from
(\ref{eq:rate 21}) and (\ref{eq:rate 12}), the lower bound on the
sum-rate given in (\ref{eq:MR sum rate}) follows.

\section{Proof of Lemma \ref{lemma:rate LB ZF}}\label{appendix:proof ZF sum rate}
Let $a_{\rm ZF}=b_{\rm ZF}=\nu$ in $\mv{A}_{\rm ZF}$ given by
(\ref{eq:ZFR ZFT}). Denote $\mv{H}_{\rm
UL}^{\dag}=[\mv{a}_1,\mv{a}_2]^T$. From (\ref{eq:rate 21}) and
(\ref{eq:rate 12}), we can show that
\begin{eqnarray}
R_{\rm LB}^{\rm ZF} &\geq&
\frac{1}{2}\log_2\left(1+\frac{\nu^2p_2}{\|\mv{a}_2\|^2\nu^2+1}\right)
\nonumber \\ && +
\frac{1}{2}\log_2\left(1+\frac{\nu^2p_1}{\|\mv{a}_1\|^2\nu^2+1}\right)
\label{eq:inequality ZF 0}
\\ &\geq& \log_2\left(\frac{2p_1p_2}{\|\mv{a}_2\|^2p_1+\|\mv{a}_1\|^2p_2+\frac{p_1+p_2}{\nu^2}}
\right) \label{eq:inequality ZF 1}
\end{eqnarray}
where (\ref{eq:inequality ZF 1}) is due to the Jensen's inequality
(see, e.g., \cite{Cover}) and the convexity of the function
$f(x)=\log_2(1+1/x), x\geq 0$ \cite{Boydbook}. Let the relay
transmit power $p_R$ in (\ref{eq:relay power}) be equal to the
maximum value $P_R$. Then, we obtain from (\ref{eq:relay power})
that
\begin{align}\label{eq:nu ZF}
\nu^2&=\frac{P_R}{\|\mv{a}_2\|^2p_1+\|\mv{a}_1\|^2p_2+\mathtt{tr}\left(\mv{H}_{\rm
UL}^{\dag}(\mv{H}_{\rm UL}^{\dag})^H(\mv{H}_{\rm
DL}^{\dag})^H\mv{H}_{\rm DL}^{\dag}\right)} \\ & \geq
\frac{P_R}{\|\mv{a}_2\|^2p_1+\|\mv{a}_1\|^2p_2+(\|\mv{a}_1\|^2+\|\mv{a}_2\|^2)^2}
\label{eq:inequality ZF 2}
\end{align}
where (\ref{eq:inequality ZF 2}) is due to the fact that
$\mathtt{tr}(\mv{XY})\leq\mathtt{tr}(\mv{X})\mathtt{tr}(\mv{Y})$, if
$\mv{X}\succeq 0$ and $\mv{Y}\succeq 0$ \cite{Horn}. Using
(\ref{eq:inequality ZF 2}), the term inside $\log_2(\cdot)$ in
(\ref{eq:inequality ZF 1}) can be further lower-bounded by
\begin{eqnarray*}
\frac{2p_1p_2}{\left(1+\frac{p_1}{P_R}+\frac{p_2}{P_R}\right)
(\|\mv{a}_2\|^2p_1+\|\mv{a}_1\|^2p_2)+\frac{p_1+p_2}{P_R}(\|\mv{a}_1\|^2+\|\mv{a}_2\|^2)^2}.
\label{eq:inequality ZF 3}
\end{eqnarray*}
Since it can be shown that
$\|\mv{a}_1\|^2+\|\mv{a}_2\|^2=\frac{\theta_1+\theta_2}{\theta_1\theta_2(1-\rho)}$,
substituting this equality into the above equation yields the lower
bound on the sum-rate given in (\ref{eq:ZF sum rate}).

\end{document}